\theoremstyle{plain}
\newtheorem{theorem}{Theorem}
\newtheorem{proposition}{Proposition}
\theoremstyle{definition}
\theoremstyle{remark}
\begin{document}

  % Reduce el espacio sobre el título

\title{\bfseries Personalized Imputation in metric spaces via conformal prediction: Applications in Predicting Diabetes Development with Continuous Glucose Monitoring Information}
\author{Marcos Matabuena\thanks{} \\
\small Health Research Institute of Santiago de Compostela, Santiago de Compostela \\
\small Department of Biostatistics, Harvard University, Boston, MA 02115, USA \\
\small \href{mailto:mmatabuena@hsph.harvard.edu}{mmatabuena@hsph.harvard.edu} \\
\and Carla Díaz-Louzao \\
\small Department of Mathematics, University of A Coruña, Spain \\
\and Rahul Ghosal \\
\small Department of Epidemiology and Biostatistics, University of South Carolina, USA \\
\and Francisco Gude-Sampedro \\
\small ISCIII Support Platforms for Clinical Research,\\ \small Health Research Institute of Santiago de Compostela (IDIS), Santiago de Compostela, Spain\\ \small Concepción Arenal Primary Care Center, Santiago de Compostela, Spain \\ \small University of Santiago de Compostela, Spain}
\date{}

\maketitle

\begin{abstract}
The challenge of handling missing data is widespread in modern data analysis, particularly during the preprocessing phase and in various inferential modeling tasks. Although numerous algorithms exist for imputing missing data, the assessment of imputation quality at the patient level often lacks personalized statistical approaches. Moreover, there is a scarcity of imputation methods for metric space based statistical objects. The aim of this paper is to introduce a novel two-step framework that comprises: (i) a imputation methods for statistical objects taking values in metrics spaces, and (ii) a criterion for personalizing imputation using conformal inference techniques. This work is motivated by the need to impute distributional functional representations of continuous glucose monitoring (CGM) data within the context of a longitudinal study on diabetes, where a significant fraction of patients do not have available CGM profiles. The importance of these methods is illustrated by evaluating the effectiveness of CGM data as new digital biomarkers to predict the time to diabetes onset in healthy populations. To address these scientific challenges, we propose: (i) a new regression algorithm for missing responses; (ii) novel conformal prediction algorithms tailored for metric spaces with a focus on density responses within the 2-Wasserstein geometry; (iii) a broadly applicable personalized imputation method criterion, designed to enhance both of the aforementioned strategies, yet valid across any statistical model and data structure. Our findings reveal that incorporating CGM data into diabetes time-to-event analysis, augmented with a novel personalization phase of imputation, significantly enhances predictive accuracy by over ten percent compared to traditional predictive models for time to diabetes.
\end{abstract}
\section{Introduction}
Recent technological advancements are providing new scientific opportunities in biological measurement systems \cite{hughes2023digital}. Consequently, the emerging novel medical tests enable the monitoring of patient conditions in real-time with high-resolution data \cite{hoeks2011real}. This progress has catalyzed the evolution of clinical systems towards precision and digital health paradigms  \cite{ho2020enabling, tyler2020real}. The next step involves the development of a large number of statistical models to exploit the inherent complexity of these new data structures and support decision-making in the paradigm of personalized medicine \cite{matabuenacontributions}.

One important example of recent technological advancements is seen with continuous glucose monitoring (CGM) devices \cite{freckmann2020basics}. Nowadays, these devices are designed to be minimally invasive and enable the detailed tracking of glucose levels at regular intervals over extended periods, including weeks and months \cite{matabuena2023reproducibility}. This provides a comprehensive view of the temporal dynamics of an individual's glucose metabolism \cite{garg2023past}. Originally developed to significantly improve the management of potentially dangerous situations, such as low blood sugar episodes in people with type 1 diabetes (hypoglycemia), CGM devices have also proven to be particular useful for managing and monitoring blood sugar levels in individuals with type 2 diabetes. 

With the increased affordability and enhanced accuracy of non-invasive glucose measurement technologies, their adoption is expanding among healthy populations. In the realm of personalized nutrition, CGM devices are pivotal, facilitating the identification of optimal dietary choices through monitoring real-time glucose fluctuations. Moreover, CGM devices have found application in epidemiological research on  medical cohorts composed on health individuals, proving their utility in pinpointing individuals at elevated risk of developing diabetes mellitus \cite{klonoff2023use}. Recent studies have highlighted the advantage of leveraging long-term glucose trends, as reflected by glycosylated hemoglobin levels, over traditional diabetes biomarkers within a general population sample \cite{lau2020hba1c, matabuena2022kernel}. However, the potential of CGM to predict diabetes incidence and the timing of disease onset in non-diabetic populations is yet to be fully explored. This gap underscores the necessity for efficient imputation strategies within two-step study designs, where only a subset undergoes detailed medical assessments, including CGM monitoring, to address this issue.

This paper delves into a pertinent scientific inquiry, aiming to develop a robust clinical score for diabetes prediction that incorporates the distinct glucose profile captured by CGM technology. We utilize data from the Spanish longitudinal diabetes study, AEGIS \cite{matabuena2022kernel, pazos2022aging}, adopting a two-step experimental design with a comprehensive ten-year follow-up \cite{woodward2013epidemiology}. This methodological approach distinguishes our work from other studies that do not incorporate CGM data \cite{edlitz2020prediction,artzi2020prediction}. Historically, the prohibitive cost of CGM devices limited baseline data collection to a secondary subsample of 580 individuals from a larger cohort of 1,516 randomly selected from the general population. The ongoing advancements and cost reductions in CGM technology anticipate its widespread use in healthy demographics, potentially integrating these devices into routine public health diabetes screenings \cite{keshet2023cgmap, 10.1371/journal.pbio.2005143}. This evolution underscores the significance of our novel predictive models based on CGM data.
 \begin{figure}[ht]
\centering
\begin{subfigure} {0.8\linewidth}
\includegraphics[width=\linewidth]{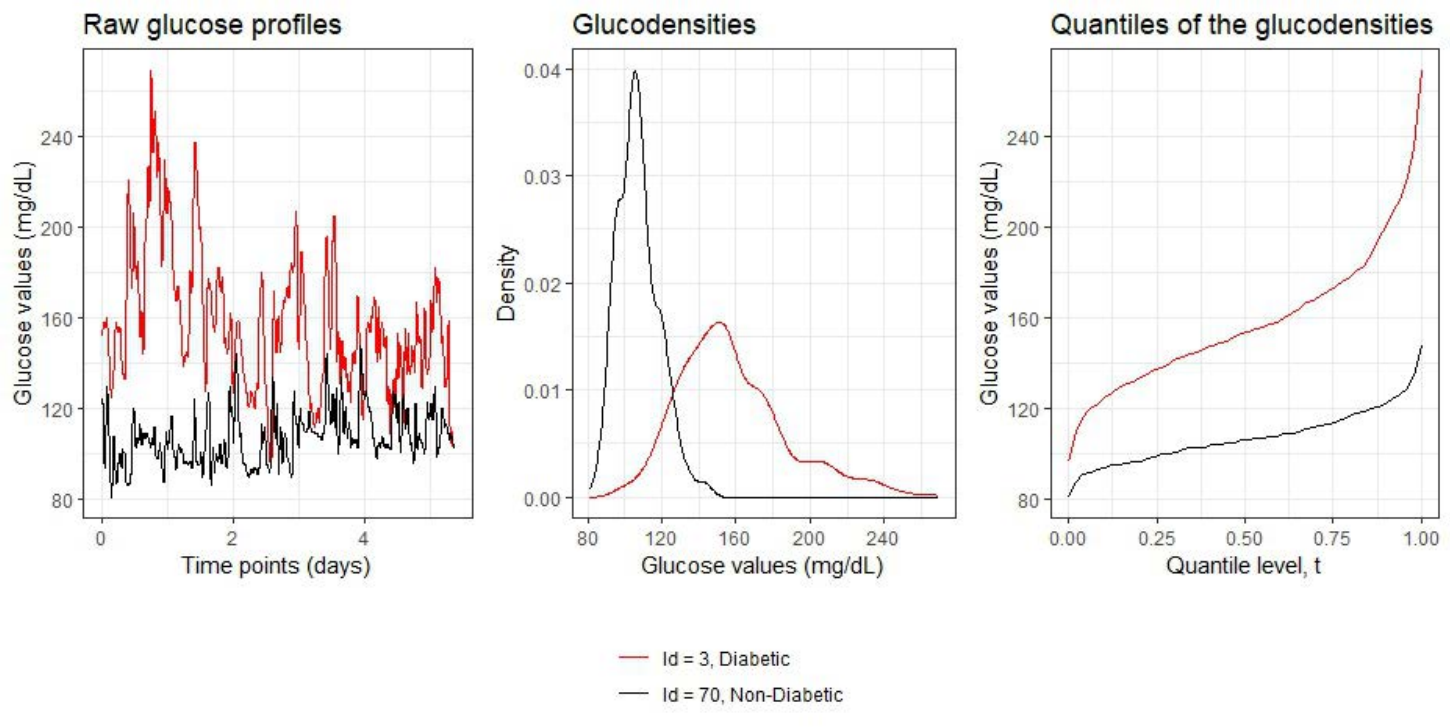}
            \caption{\footnotesize Glucodensity profiles from raw CGM data for a diabetic and non diabetic individual.}
            \label{fig:glucodensities_1}
            \end{subfigure}
           \begin{subfigure} {0.8\linewidth}
\includegraphics[width=0.8\linewidth,height=0.6\linewidth]{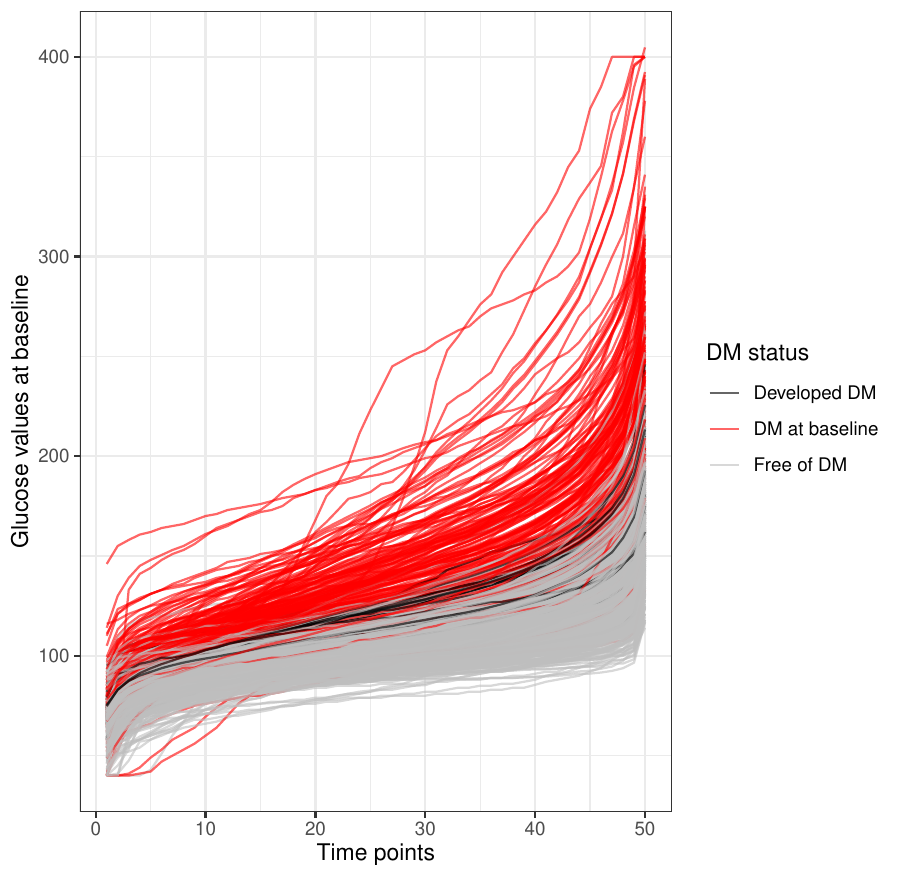}
            \caption{\footnotesize Glucodensities profiles of all subjects with CGM, separated according to the status of  diabetes. Red: individuals with diabetes at baseline. Black: individuals without diabetes at baseline who developed diabetes throughout the study. Grey: individuals free of diabetes at the end of the study.}
            \label{fig:glucodensities_basal}
        \end{subfigure}
\end{figure}

The concept of glucodensity provides more information than traditional CGM summaries \cite{matabuena2021glucodensities}. Figure \ref{fig:glucodensities_1} (top panel) shows the glucodensity profiles of a randomly chosen diabetic and non-diabetic subject, while the bottom panel displays these profiles across all subjects, categorized into three groups: those with pre-existing diabetes, those who developed diabetes during the study, and those free of diabetes at study's end. We hypothesize that glucodensity offers particular utility in non-diabetic populations, a stark contrast to traditional CGM composicional metrics that are specifically tailored for disease-populations. Unlike these traditional CGM metrics, which are defined by specific glucose thresholds applicable to diabetics, our method is designed to discern subtle differences in glucose homeostasis over the complete range of CGM values recorded by CGM monitor.

We are interested in improving the reliability and power of predictive models for the time to diabetes \cite{Wang2024}. For this purpose, we introduce an imputation step for CGM information that minimizes the impact of using a two-step design \cite{tsiatis2007semiparametric} in terms of statistical efficiency. Generally, in the field of functional data settings, there is a significant gap in the literature concerning the imputation of statistical objects, even for Hilbert space-valued random variables \cite{hsing2015theoretical}. Inadequate imputation can severely affect predictive models  \cite{tsiatis2007semiparametric} due to the large dimensionality of statistical functional objects. To address this gap in the literature, our study introduces a novel metric space imputation framework based on a weighted least squares global Fréchet model \cite{petersen2019frechet}, incorporating a conformal prediction \cite{vovk2005algorithmic} step for robust uncertainty quantification. The incorporation of uncertainty quantification steps provides the opportunity to assess the imputation quality and offer personalized imputation rules in line with precision medicine principles \cite{tsiatis2019dynamic} This method is applicable not just to glucodensity data \cite{matabuena2021glucodensities} but also to other complex statistical objects in separable metric spaces $\Omega$  \cite{petersen2019frechet}.

In diabetes research, risk scores like FINDRISC \cite{makrilakis2011validation} and GDRS \cite{muhlenbruch2018derivation} have been developed using logistic or Cox regression models with scalar lifestyle and demographic variables. However, the integration of CGM data for long-term diabetes onset prediction in healthy populations remains underexplored due to the scarcity of extensive long-term CGM cohorts. Our personalized framework utilizes novel distributional glucodensity representations \cite{matabuena2021glucodensities}, enhancing the prediction of diabetes onset and providing a more comprehensive understanding of glucose dynamics and progression than traditional screening methods for diabetes mellitus disease.

%Papers: \cite{chiou2014functional,crambes2019regression,rao2021modern} (FPCA,PACE, multiple imputation..),

%\textcolor{blue}{}

\subsection{Contributions}

We briefly summarize the main methodological contributions of this paper as well as the key findings from the analysis of the AEGIS study for modelling time to diabetes.

\begin{itemize}
   \item To the best of our knowledge, we propose the first global Fréchet regression model for metric spaces with missing responses. Our new estimators, based on inverse-probability weighted estimators, are utilized to impute missing glucodensity. Additionally, we provide an algorithm for estimating the conditional variance of the quantile-based glucodensity representation, assessing the uncertainty resulting from the imputation.
   
   \item We extend conformal inference algorithms to provide prediction regions for distributional representations and define a personalized imputation criterion based on the uncertainty related to the imputation. To the best of our knowledge, this is the first work validating the quality of personalized imputations for functional and distributional responses.
   
   \item Utilizing the personalized imputation tools in the AEGIS study, we provide the following scientific insights:
    \begin{enumerate}
       \item We impute the distributional representations of continuous glucose monitoring (CGM) data using the proposed global Fréchet regression model.
       
       \item With our conformal inference algorithm, we identify patients in whom glucodensity can be imputed for follow-up and time-to-diabetes analysis in this longitudinal study.
       
       \item In the  time-to-event analysis of diabetes, we demonstrate that our glucodensity approach, aided by personalized imputation, outperforms traditional CGM metrics. The proposed models show superior prediction accuracy compared to existing literature, highlighting the effective incorporation of CGM data as a reliable source of information for the progression of diabetes mellitus.
    \end{enumerate}
    \item The codes, along with the methods proposed and utilized in this study, are available for reproducibility purposes. They are publicly accessible at \url{https://github.com/CarlaDiaz/Conformal_Imputation}.

\end{itemize}

%\subsection{Literature review for missing data imputation}
%\textcolor{red}{ I will be including this before contributions, which makes the connection stronger with the previous paragraph}

%\subsubsection{Imputation step}
%\subsubsection{Validation quality of the imputation}

%\subsubsection{Diabetes risk scores}

%\newpage

\section{Background and related work}

\subsection{Statistical Models in Metric Spaces}

One of the most prominent applications of statistical modeling in metric spaces is in biomedical problems, particularly in personalized and digital medicine \cite{rodriguez2022contributions}. These applications often involve complex statistical objects, such as curves and graphs, to record physiological functions and measure brain connectivity patterns at a high resolution. A notable example is the concept of "glucodensity" \cite{matabuena2021glucodensities}, a distributional representation of glucose profiles. This concept has significantly advanced diabetes research methodologies \cite{matabuena2022kernel} and has proved useful in analyzing accelerometer data \cite{matabuena2021distributional,ghosal2021scalar,ghosal2023multivariate}. Methodologically, statistical regression analysis in metric spaces is an emerging field \cite{fan2021conditional,chen2021wasserstein,petersen2021wasserstein,tucker2022modeling,ghosal2021fr,petersen2016functional,zhou2021dynamic,dubey2022modeling,10.3150/21-BEJ1410,kurisumodel,chen2023sliced}. Recent publications have explored hypothesis testing \cite{10.1214/20-AOP1504,dubey2019frechet,petersen2021wasserstein,fout2023fr}, variable selection \cite{tucker2021variable}, multilevel models \cite{bhattacharjee2023geodesic}, dimension-reduction \cite{zhang2022nonlinear}, semi-parametric \cite{ghosal2021fr,bhattacharjee2021single,ghosal2023predicting}, and non-parametric regression models \cite{schotz2021frechet,hanneke2022universally,bulte2023medoid,bhattacharjee2023nonlinear}.

\subsection{Missing Data Imputation and Statistical Methods for Two-Sample Design Studies}
The treatment of missing data, a longstanding issue in statistics, significantly impacts medical study reliability, as emphasized by leading medical journals \cite{little2012prevention}. However, research addressing missing data in metric space models remains scarce. In spaces embeddable in separable Hilbert spaces (negative-type spaces) \cite{lyons2013distance,lyons2020strong}, we have proposed new statistical hyphotesis for randomized clinical trials and paired design \cite{doi:10.1080/00031305.2023.2200512,matabuena2022kernel}.  For a standard functional data,  current methods primarily utilize functional principal component analysis \citep{yao2005functional,preda2010nipals,chiou2014functional,crambes2019regression} and multiple imputation \citep{he2011functional,rao2021modern}. However, these approaches often inadequately address the uncertainty induced by imputation. For distributional data \citep{ghosal2021distributional,matabuena2021distributional}, it's crucial to account for the constraints of the underlying space of the distributional object.

Recent studies in standard settings have focused on addressing missing data in large cohorts and high-dimensional data, emphasizing the importance of uncertainty quantification \cite{zaffran2023conformal}, dimensionality reduction \cite{zhu2022high}, and imputation step \cite{liu2023kernel}. Machine learning algorithms, such as XGBoost  \cite{doi:10.1080/10618600.2023.2252501}, along with optimal transport-based algorithms \cite{muzellec2020missing}, have shown promise in imputation tasks, proving to be more efficient in certain non-linear settings. In the field of digital medicine, new methods have been developed for wearable data, such as data from accelerometers, using specialized models that focus on aggregate summaries like physical activity counts \cite{tackney2021framework}. Recent advances in two-sample designs have included proposals for optimal subsampling methods and efficient influence function-based estimation techniques \cite{lumley2022choosing, https://doi.org/10.1002/sim.9300}. To date, specific studies on imputing functional data, especially density functions in non-vectorial spaces, remain unexplored. The development of specific methods for functional data, such as medical images and distributional representations for wearable information, is increasingly relevant in precision medicine for the proliferation of summarizing the medical conditions of patients with complex statistical objects \cite{matabuenacontributions}.
\section{AEGIS Study Overview and CGM Glucodensity Approach}
\subsection{Study Background}
The A Estrada Glycation and Inflammation Study (AEGIS) \cite{gude2017glycemic}, spanning over a decade, longitudinally tracks 1516 subjects to explore health dynamics, focusing specifically on diabetes. A distinctive aspect of AEGIS is the adoption of continuous glucose monitoring (CGM) technology, offering in-depth glucose profiles for a significant subset of participants of health individuals  at two  time points (years 0 and 5), which is a notable deviation from many clinical studies with shorter durations and fewer participants.

\subsection{Study Goals}
AEGIS aims to: 
a) Identify biomarkers within CGM data for stratifying diabetes risk and complications.
b) Develop dynamic patient phenotypes based on glucose evolution.
c) Characterize metabolic changes to enhance personalized clinical interventions in diabetes research.

\subsection{Data Collection and Participants}
CGM data were recorded every 5 minutes, encompassing about one-third of the study's participants at various time points. Baseline data include dietary habits, laboratory values, and questionnaires assessing metabolic capacity and lifestyle factors. Out of the 1516 participants, 622 were selected for CGM procedures, with 580 successfully completing the protocol and providing analyzable data.

\subsection{Data Analysis Goals}
The focus of this paper is to establish a novel diabetes risk model using high-resolution CGM data from a 9-year longitudinal follow-up. This model, formulated using baseline data, aims to highlight the superiority of CGM in comparison to traditional diabetes biomarkers such as A1C and FPG.

\subsection{CGM Data Collection Protocol}
Participants were equipped with an \textit{Enlite\texttrademark} sensor and \textit{iPro\texttrademark} CGM device (Medtronic Inc., Northridge, CA, USA). Glucose concentrations were recorded at 5-minute intervals for 7 days. The analysis omits the first day and any day with over 2 hours of data-acquisition failure.

\subsection{Ethical Considerations}
The study, sanctioned by the Regional Ethics Committee (Comité Ético de Investigación Clínica de Galicia, code: 2012/025), adhered to the Helsinki Declaration guidelines. Informed consent was obtained in writing from all participants.

\subsection{Glucodensity Approach and Distributional Representations}
\label{ss:quantile_representations}

Building upon our prior work \cite{matabuena2021glucodensities}, this paper introduces the "glucodensity approach" to analyze CGM data, a notable advancement beyond conventional time-in-range metrics in diabetes research. These traditional metrics often categorize glucose levels into fixed intervals, which may not capture individual variations, particularly in non-diabetic cohorts.

\subsubsection{Rationale Behind the Glucodensity Approach}
The glucodensity method offers a refined understanding of glucose profiles by considering the entire distribution of glucose values, rather than just the time within specific ranges. This approach is crucial for unraveling the complexities of glucose dynamics, essential for comprehending the progression to diabetes and its complications.

\subsubsection{Modelling Framework}
For each participant in the study, denoted as the $i$-th participant, we analyze their Continuous Glucose Monitoring (CGM) data, represented as $G_{ij}$ where $j = 1, \dots, n_{i}$. These data points are crucial for capturing the distributional characteristics of glucose levels, which are essential for understanding individual metabolic patterns. 

We focus on the empirical quantile function of each participant's glucose measurements. This function is defined for each participant as $Y_i(\rho) = \hat{Q}_i(\rho)$, where $\rho$ ranges over the interval $[0,1]$. Here, $\hat{Q}_i(\rho)$ denotes the generalized inverse of the empirical cumulative distribution function (CDF) associated with the participant's glucose levels. The empirical CDF, $\hat{F}_{i}(a)$, is given by the proportion of glucose measurements that do not exceed a certain level $a$, mathematically expressed as $\hat{F}_{i}(a) = \frac{1}{n_i} \sum_{j=1}^{n_i} \mathbf{1}\{G_{ij} \leq a\}$, where $G_{ij}$ are the glucose values recorded for the $i$-th participant.

\subsubsection{Implications for Diabetes Prediction}
Employing this distributional perspective on glucose monitoring provides a comprehensive view of an individual's glucose regulation and its deviations from normative patterns. This method enables more precise and personalized risk assessments for diabetes, potentially leading to earlier interventions and improved management strategies, contrasting with traditional diabetes risk models that often rely on singular biomarkers like A1C or FPG.

\subsubsection{Advantages Over Traditional Metrics}
The glucodensity approach \cite{matabuena2021glucodensities} offers several advantages over traditional CGM metrics:
i) It captures a comprehensive view of glucose fluctuations over time, including high and low extremes often overlooked in time-in-range analyses.
ii) It facilitates identification of subtle glucose patterns that might signify early metabolic changes leading to diabetes.
iii) It is adaptable to various populations, including non-diabetic ones, thereby enhancing its utility in preventive medicine.

\subsection{Covariates and Analysis}
Our analysis included a subset of 580 participants with complete CGM data. Covariates included demographic characteristics (age, sex, body mass index), laboratory measurements (lipid profile, liver enzymes). Statistical models were adjusted for these covariates to isolate the effect of glucose dynamics on diabetes risk prediction. A complete list of variables used in the creation of predictive score are provided in Table \ref{tab:variables_in_analysis}.

%\newpage
%\newpage

\begin{table}[!h]
\centering
            \caption{\footnotesize Summary of the predictor variables used in the regression analysis, for the whole sample and separated by the Sex and the fact of having or not CGM at baseline. We represent the means and the standard deviations (in brackets) of the continuous variables (Age, Body mass index (BMI), Glycosilated hemoglobin (HbA1c), Fasting plasma glucose (FPG), Albumin, Insulin) for Men and Women with and without CGM at baseline. For the categorical variables Smoking and Diabetes mellitus , we include the absolute frequency and percentage (in brackets). 
            \label{tab:variables_in_analysis}}
            \hspace*{-1 cm}
            \scalebox{0.80}{
                \begin{tabular}{|l|c|c|c|c|c|c|}
                    \hline
                    & \multicolumn{2}{|c|}{Total sample} & \multicolumn{2}{|c|}{CGM} & \multicolumn{2}{|c|}{Not CGM} \\ 
                    \cline{2-7}
                    & Men & Women & Men & Women & Men & Women \\
                    & (n = 678) & (n = 838) & (n = 220) & (n = 360) & (n = 458) & (n = 478) \\
                    \hline
                    Age & 51.97 (17.58) & 53.09 (17.52) & 47.85 (14.79) & 48.21 (14.48) & 53.96 (18.47) & 56.76 (18.69) \\
                    BMI & 28.56 (4.64) & 27.99 (5.40) & 28.92 (4.74) & 27.71 (5.33) & 28.39 (4.59) & 28.19 (5.46) \\ 
                    HbA1c & 5.65 (0.83) & 5.57 (0.65) & 5.64 (0.88) & 5.52 (0.69) & 5.66 (0.81) & 5.62 (0.62) \\ 
                    FPG & 97.54 (24.50) & 92.05 (21.03) & 97.06 (23.37) & 90.96 (20.85) & 97.79 (25.04) & 92.87 (21.15) \\ 
                    Albumin & 4.48 (0.25) & 4.35 (0.22) & 4.51 (0.23) & 4.36 (0.21) & 4.46 (0.25) & 4.35 (0.22) \\
                    Insulin & 14.02 (13.50) &  11.69 (7.31) & 15.54 (18.41) & 11.69 (7.41) & 13.29 (10.29) & 11.69 (7.24) \\ 
                    \hline
                    Smoking & & & & & & \\
                    \hspace{0.25cm} Ex-smoker & 267 (39.38\%) & 128 (15.27\%) & 77 (35.00\%) & 77 (21.39\%) & 190 (41.48\%) & 51 (10.67\%) \\
                    \hspace{0.25cm} Smoker & 172 (25.37\%) & 124 (14.80\%) & 56 (25.45\%) & 62 (17.22\%) & 116 (25.33\%) & 62 (12.97\%) \\
                    Diabetes mellitus  & 101 (14.90\%) & 82 (9.79\%) & 33 (15.00\%) & 31 (8.61\%) & 68 (14.85\%) & 51 (10.67\%) \\ 
                    \hline
                \end{tabular}
            }
        \end{table}

\section{Mathematical Models}
\label{sec:algorithm}

\subsection{Imputation Step from Distributional Representations}
\label{sec:imputation}
\subsubsection{Linear Regression Model for Metric Space Responses: Global Fréchet Model}

We consider the multivariate random variable $(X, Y)\in \mathcal{X}\times \mathcal{Y}$, where $X \in \mathcal{X}= \mathbb{R}^{p}$ and $Y\in \mathcal{Y}$, with $\mathcal{Y}= (\Omega, d)$ being a separable metric space that adheres to the regularity conditions introduced in \cite{petersen2019frechet}. These conditions ensure the existence and uniqueness of the Fréchet conditional mean denoted as $m(x) = \mathbb{E}(Y|X=x)$ for all $x\in \mathcal{X}$.

In this article, our focus lies on scenarios where the conditional Fréchet mean is expressed through a linear regression model between the predictor and response variables. Such a regression model is known as a global Fréchet regression and is defined as:

\begin{equation}
    m(x) = \arg \min_{y\in \mathcal{Y}} \mathbb{E}\left( \left[1 + (x - \mu)^{\intercal}\Sigma^{-1}(X-\mu)\,\right] d^{2}(Y, y)\right),
\end{equation}

where $\Sigma = \text{Cov}(X,X)$, and $\mu = \mathbb{E}(X)$. Given an i.i.d. (independent, identical, and distributed) random sample $\mathcal{D}_{n}=\{(X_i,Y_i)\}_{i=1}^{n}$, we can construct an estimator $\hat{m}(\cdot)$ from the Global Fréchet model as follows:

\begin{equation}
    \hat m(x)=  \arg  \min_{y\in \mathcal{Y}} \sum_{i=1}^{n} \omega_{in}(x) d^{2}(y,Y_i),
\end{equation}

where $\omega_{in}(x)= \frac{1}{n} \left[1 + (x-\overline{X})^{\intercal}\hat{\Sigma}^{-1}(X_i-\overline{X})\right]$, with $\overline{X}= \frac{1}{n} \sum_{i=1}^{n} X_i$, and $\hat{\Sigma}= \frac{1}{n} \sum_{i=1}^{n} (X_i-\overline{X})(X_i-\overline{X})^{\intercal}$.

\subsection*{Global Fréchet Regression from Weighted Least Squares}

We extend the concept of global Fréchet regression, to incorporate sampling mechanisms induced by missing data patterns, by following scalar response regression models with missing responses. In particular, consider the weighted least squares (WLS) linear regression for scalar response $Y_i \in \mathbb{R}$ . Let $\mathcal{D}_{n}= \{(X_i,Y_i, w_i)\}_{i=1}^{n}$ be the observed random sample, where $w_i$ denotes the weight of participant $i$. For $Y_i \in \mathbb{R}$, under a WLS model, the objective function is given by:

\begin{align*}
    \hat \beta &= \arg  \min_{\beta} \sum_{i=1}^n w_i (Y_i - \langle X_i,\beta \rangle)^2\\
    &= \arg  \min_{\beta} \|\sqrt{W}(Y - X\beta)\|^2,
\end{align*}

where $\beta = (\beta_1, \dots, \beta_p)^{\intercal}$ is a vector of model parameters, $W = \text{diag}(w_1, \dots, w_n)$ is a weight matrix, $Y = (Y_1, \dots, Y_n)$, $X = (X_1, \cdots, X_n)$, and $\|\cdot\|$ is the Euclidean norm. The solution is $\hat \beta = (X^{\intercal} W X)^{-1}X^{\intercal}WY$.

\noindent A future prediction at any $x \in \mathbb{R}^{p}$ is given by:
\begin{align*}
    x\hat \beta &= x(X^{\intercal}W X)^{-1}X^{\intercal}WY\\
    &= k(x)Y\\
    &= \sum_{i=1}^n k_{i,w}(x)Y_i,
\end{align*}

\noindent where $k_{w}(x) = (X^{\intercal} W X)^{-1}X^{\intercal}W$ with $k_{w}(x) = (k_{1,w}(x), \ldots , k_{n,w}(x))$ and $k_{i,w}(x) = \frac{s_{i,w}(x)}{\sum_{i=1}^n s_{i,w}(x)}$. The estimator for conditional mean can be reformulated as:

\begin{equation}
  \hat{m}(x)= \arg  \min_{y\in \mathbb{R}}\sum_{i=1}^n k_{i,w}(x)(Y_i - y)^2.  
\end{equation}

\begin{proposition}
The WLS estimator from the global Fréchet model is:

\begin{equation}
    \hat m(x)=  \arg  \min_{y\in \mathcal{Y}} \frac{1}{n} \sum_{i=1}^{n} \omega_{in}(x)  d^{2}(y,Y_i),
\end{equation}

where

\[
\omega_{in}(x) = \frac{w_{i}\left[1 + (x-\overline{X})^{\intercal}\hat{\Sigma}^{-1}(X_i-\overline{X})\right]}{\sum_{j=1}^{n} w_{j}\left[1 + (x-\overline{X})^{\intercal}\hat{\Sigma}^{-1}(X_j-\overline{X})\right]}
\]

and $\overline{X}= \frac{1}{n} \sum_{i=1}^{n} X_i$, and $\hat{\Sigma}= \frac{1}{n} \sum_{i=1}^{n} (X_i-\overline{X})(X_i-\overline{X})^{\intercal}$.
\end{proposition}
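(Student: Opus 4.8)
The plan is to reduce the metric-space claim to the scalar weighted-least-squares identity established immediately above the proposition, and then to lift it to $(\Omega,d)$ by the same device that turns ordinary least squares into global Fréchet regression in \cite{petersen2019frechet}: replace the squared residual $(Y_i-y)^2$ by $d^2(y,Y_i)$ and the range $\mathbb{R}$ by $\mathcal{Y}$. Since the constant $1/n$ in front of the objective does not affect the $\arg\min$, it suffices to identify the normalized hat-vector $k_{i,w}(x)$ of the scalar WLS fit with the weights $\omega_{in}(x)$ displayed in the statement.

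First I would make the affine structure explicit by augmenting the design to $\tilde X_i=(1,X_i^\intercal)^\intercal$, so that the fit $\hat\beta=(\tilde X^\intercal W\tilde X)^{-1}\tilde X^\intercal W Y$ yields a prediction $\tilde x^\intercal\hat\beta$ that is affine in $x$; this is what produces the ``$1+$'' term in $\omega_{in}(x)$. The core computation is to evaluate the hat-vector $\tilde x^\intercal(\tilde X^\intercal W\tilde X)^{-1}\tilde X^\intercal W$ in closed form. I would block-invert $\tilde X^\intercal W\tilde X$, whose blocks are the zeroth, first, and second sampling-weighted moments of the covariates; after centering, the Schur complement of the scalar block is a covariate scatter matrix and the off-diagonal block is a covariate mean, so the $i$-th entry collapses to a multiple of $w_i\,[\,1+(x-\bar X)^\intercal\hat\Sigma^{-1}(X_i-\bar X)\,]$. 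Because the intercept places the vector of ones in the column space of $\tilde X$, the hat-vector reproduces constants and hence $\sum_i k_{i,w}(x)=1$; dividing by $\sum_j k_{j,w}(x)$ then produces exactly the normalized weights $\omega_{in}(x)$ of the proposition.

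With the hat-vector identified, the remaining step is immediate: the scalar prediction $\sum_i k_{i,w}(x)Y_i$ is the unique minimizer over $y\in\mathbb{R}$ of $\sum_i k_{i,w}(x)(Y_i-y)^2$, since a convex combination is the minimizer of the corresponding weighted sum of squared deviations. Substituting $d^2(y,Y_i)$ for $(Y_i-y)^2$ and optimizing over $y\in\mathcal{Y}$ gives the weighted global Fréchet objective, and the regularity conditions imported from \cite{petersen2019frechet} guarantee existence and uniqueness of the minimizer, so the displayed estimator is well posed. As a sanity check I would set $w_i\equiv1$ and verify that $\omega_{in}(x)$ reduces to the unweighted global Fréchet weights $n^{-1}[1+(x-\bar X)^\intercal\hat\Sigma^{-1}(X_i-\bar X)]$.

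The step I expect to be the main obstacle is the block inversion, and specifically the bookkeeping of whether the surviving covariate moments are weighted or unweighted. Carrying out the textbook WLS algebra literally produces the \emph{weighted} mean and covariance, whereas the statement retains the \emph{unweighted} $\bar X=n^{-1}\sum_i X_i$ and $\hat\Sigma=n^{-1}\sum_i(X_i-\bar X)(X_i-\bar X)^\intercal$ in the geometry term while letting $w_i$ enter only as a multiplicative factor. This is natural in the missing-response setting, where the covariates are fully observed and the weights correct only for the missingness of $Y_i$, so I would treat the proposition as defining this hybrid estimator and devote the most care to confirming the closed form is internally consistent --- via the normalization $\sum_i\omega_{in}(x)=1$ and the $w_i\equiv1$ limit --- rather than silently conflating it with the fully reweighted WLS fit.
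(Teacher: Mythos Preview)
Your approach is essentially the paper's: the paper's entire justification is the single sentence ``This is directly obtained by extending the WLS estimation criterion (3) for global Fr\'echet model,'' i.e., replace $(Y_i-y)^2$ by $d^2(y,Y_i)$ and $\mathbb{R}$ by $\mathcal{Y}$, which is exactly your lifting step. Your block-inversion computation and your observation about the hybrid use of unweighted $\bar X,\hat\Sigma$ with multiplicative $w_i$ go well beyond what the paper supplies---indeed the paper treats the displayed formula as a definition rather than something to be derived, so your caution about not conflating it with the fully reweighted WLS fit is well placed.
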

This is directly obtained by extending the  WLS estimation criterion (3) for global Fréchet model.

\subsection*{Linear Regression Model for Missing Metric Space Responses}

\subsubsection*{Estimation with Missing Observations}

In cases where some distributional observations $Y_i$ are missing ($\delta_{i}=0$ for some $i \in \{1,\dots, n\}$), we introduce a special weighting estimator of the form:

%In cases where some observations are missing ($\delta_{i}=0$ for some $i \in \{1,\dots, n\}$), we consider only the non-missing observations ($\delta_{i}=1$ for $i=1,\dots, n$) in the construction of regression model. For this purpose, we introduce a special weighting estimator of the form:
\begin{align}
    \hat{m}(x)&= \arg \min_{y\in \mathcal{Y}}  \sum_{i=1}^{n} \omega_{in}(x) d^{2}(y,Y_i),\\
  \omega_{in}(x) &= \frac{ w_{i} \left[1 + (x-\overline{X})^{\intercal}\hat{\Sigma}^{-1}(X_i-\overline{X})\right]}{\sum_{j=1}^{n} w_j  \left[1 + (x-\overline{X})^{\intercal}\hat{\Sigma}^{-1}(X_j-\overline{X})\right]}, w_{i}= \frac{ \frac{\delta_{i}}{\hat{P}(\delta=1|X=X_{i})}}{\sum_{i=1} \frac{\delta_{i}}{\hat{P}(\delta=1|X=X_{i})}}.
\end{align}

In essence, here we perform inverse-probability weighting (IPW) and consider only the non-missing observations ($\delta_{i}=1$) in the construction of regression model. 

\subsection{Closed-Projection Algorithm for 2-Wasserstein Metric}

In the context of our research, each observation indexed by $i=1,2,\dots,n$ represents a patient under study, with $Y_i$ denoting the distribution or the functional outcome corresponding to the $i$-th participant. We proceed by constructing the regression model, directly focusing on modeling the point-wise mean of the quantile function $Y_i(t)$, $t \in [0,1]$ as a function of the covariates. This choice is motivated by the connection of the quantile function to the $2-$Wasserstein distance, and essentially models the Wasserstein barrycenter of the distributional outcome \citep{ghosal2023distributional} based on the covariates.

The $2$-Wasserstein distance, denoted as $d_{W_2}(\mu,\nu)$, serves as a powerful tool for measuring the dissimilarity between probability measures, making it an ideal choice for our analysis. When considering $\mu$ and $\nu$ as two suitable measures on $\mathcal{R}$ with finite second moments, and $Q_\mu$ and $Q_\nu$ as their respective quantile functions, the Wasserstein distance $d_{W_2}(\mu,\nu)$ between $\mu$ and $\nu$ is known to be equivalent to the $L^2$ distance between $Q_\mu$ and $Q_\nu$, as expressed in Equation \ref{eq:wassL2equiv}:

\begin{equation}
    \label{eq:wassL2equiv}
    d_{W_2}(\mu,\nu) = \left[\int_0^1 (Q_\mu(t) - Q_\nu(t))^2\mathrm{d}t\right]^{1/2}.
\end{equation}
This elegant equivalence allows us to bridge the gap between probability measures and quantile functions, offering profound insights into the Wasserstein distance's significance. See \cite{ghosal2021distributional} for various advantages offered by the quantile function based distributional representation as opposed to histogram or densities.

As a consequence, the Fréchet mean \cite{petersen2019} of a random measure can be characterized by the point-wise mean of the corresponding random quantile process. Therefore, by introducing a regression model for the random quantile function $Y_i$, we implicitly construct a model for the conditional Fréchet mean of the underlying glucose distribution measure \cite{petersen2021wasserstein}.

Let $X_{i}\in \mathcal{R}^p$ represent the $p$-dimensional covariate vector. In this scenario, the global linear Fréchet model takes the form:

\begin{equation}
m(X_{i},t)=E(Y_i(t)|X_i)=\alpha(t) + \beta(t)^{\intercal}X_i,\quad  t\in [0,1].
\label{eq:wass_regression_model}
\end{equation} 
Here, $\alpha(t)$ represents the intercept function, and $\beta(t)$ denotes the coefficient function.

Assuming we have access to a sample $\mathcal{D}_{n}=\{\left(X_i, Y_i,w_i \right)\}^{n}_{i=1}$, where $Y_i$ serves as the response quantile function and ${X}_i\in\mathbb{R}^{p}$, we employ the weighted least squares criterion to estimate the parameters. The procedure can be outlined in two steps. Firstly, for any $t \in [0,1]$, we compute the estimates:

\begin{equation}
\left(\hat{\alpha}(t),\hat{\beta}(t) \right)=\underset{a\in \mathcal{R},b\in \mathcal{R}^p}{\operatorname{argmin}}\, \sum_{i=1}^{n} w_i\left[Y_{i}(t)-a-b^{\intercal} X_{i}\right]^{2}.
\label{eq:reg_param_est}
\end{equation}

These estimates lead to the initial fitted quantile functions:

\begin{equation}
Y_{i}^{*}(t)=\hat{\alpha}(t)+\hat{\beta}^{\intercal}(t) X_{i},\quad t \in [0,1].
\label{eq:Yhat_theta_t}
\end{equation}

However, as a function of $t$, it may occur that $Y_i^*(t)$ is not monotonically increasing. Hence we project this fitted value onto the nearest monotonic function in the $L^2[0,1]$ sense, resulting in valid fitted quantile functions $\hat{Y}_i(t)$ \citep{petersen2019frechet}.
This process yields fitted values $\hat{Y}_i(t)$ for any set of observed covariates $X_i$, thus providing valuable insights into the conditional Fréchet mean (based on $2-$Wasserstein metric) within the context of our research.

\subsection{Conformal inference for distributional representation and missing responses}

Conformal inference, a framework for uncertainty quantification in diverse settings, has emerged as a significant tool in statistics, especially in medical applications. Key advantages of conformal prediction include:
i) Providing model-independent prediction regions, ii) Offering non-asymptotic guarantees under broad exchangeability assumptions, iii) Delivering fundamentally non-parametric predictive regions.

This paper introduces a novel algorithm for conformal inference in distributional regression models, tailored for responses lying within a 2-Wasserstein space. This framework facilitates the definition of point-wise residuals and involves predictors in a separable Hilbert space, denoted as $\mathcal{H}$. By leveraging the supremum norm, we streamline computation of prediction regions for response quantile functions, focusing on conditional scenarios with covariates to establish Type II tolerance regions. Practically, we connect the regression model, symbolized by $m(\cdot,\cdot)$, with the conditional mean estimator ($m(X_i,t)$). Our goal is to construct a prediction region, $\mathcal{C}^{\alpha}(X)$, with a confidence level $\alpha$, ensuring $P(Y \in \mathcal{C}^{\alpha}(X)) = 1 - \alpha$. This region either minimizes volume or conforms to specified geometric constraints. Utilizing conformal inference on a random sample, $\mathcal{D}_{n}$, we assure non-asymptotic guarantees: $P(Y \in \widehat{\mathcal{C}}_{n}^{\alpha}(X)) \geq 1-\alpha$, converging to the oracle prediction region as $n \to \infty$.

In this study, we observe $\mathcal{D}_{n} = \{(X_i, Y_i, \delta_i)\}_{i=1}^{n}$, where $i$ indexes patient data, and $\delta_{i}$ indicates missing data. The missingness, contingent on covariates $Y$ (i.e., $Y \perp \delta | X$), disrupts sample exchangeability. Achieving marginal non-asymptotic guarantees hinges on the true missing data weights $w_{i}$, a rarity in practice. However, with increasing $n$, the marginal coverage is guaranteed, assuming precise estimation of the regression model and missing data mechanism.
%non-asymptotic results approximate accuracy
The core steps of our proposed conformal prediction algorithm are delineated in Algorithm \ref{alg:metd1}. The calibration sample, $\mathcal{D}_{\text{calibration}}$, is pivotal in conformal inference. It is used to calibrate the algorithm and establish necessary confidence levels or significance thresholds for prediction regions, based on nonconformity scores. This calibration ensures the regions accurately reflect the intended uncertainty level and maintain correct coverage probabilities in the non-asymptotic regime.

\begin{algorithm}
    \caption{Conformal Prediction Algorithm for Distributional Responses and Missing Data}
    \label{alg:met1dmissing}
    \begin{algorithmic}[1]
    \State Partition the sample set $\mathcal{D}_{n}$ into three distinct and independent random samples: $\mathcal{D}_{train1}$, $\mathcal{D}_{train2}$, and $\mathcal{D}_{calibration}$.
        \State Estimate the regression function $m(\cdot,\cdot)$ using the global linear Fréchet model (8) as $\hat{m}(\cdot,\cdot)$ using the random sample $\mathcal{D}_{train1}$.
        \State For each observation $i \in \mathcal{D}_{\text{train2}}$ and time point $t \in [0,1]$, do steps 4 ,5 and 6.
        \State Compute the estimated response $\hat{m}(X_i,t)$.
        \State Calculate the residual ${r}_{i}(t) = |Y_{i}(t) - \hat{m}(X_i,t)|$.
        \State Derive the modulation function $\hat{s}(X_i,t)$ from the sample $\{(X_i,r_{i})\}_{i \in \mathcal{D}_{\text{train2}}}$, where $\hat{s}(X_i,t) = \hat{sd}(X_{i},t)$. 

       \State For each observation $i \in \mathcal{D}_{\text{calibration}}$ perform steps 8 and 9.
            \State Define the nonconformity score $R_{i} = \sup_{t \in [0,1]} \frac{|Y_{i}(t) - \hat{m}(X_i,t)|}{\hat{s}(X_i,t)}$.
        
        \State Estimate the empirical distribution $\widetilde{G}^{*}(t)$ as $\frac{1}{\sum_{i \in \mathcal{D}_{\text{calibration}}, \delta_{i} = 1} w_{i}} \sum_{i \in \mathcal{D}_{\text{calibration}}, \delta_{i} = 1} w_{i} \mathbf{1}\{R_i \leq t\}$.
        \State Compute the empirical quantile $\hat{q}_{1-\alpha}$ at level $1-\alpha$.
        \State Construct the prediction region $\widehat{\mathcal{C}} _{n}^{\alpha}(X,t) = [\hat{m}(X,t) - \hat{q}_{1-\alpha} \hat{s}(X,t), \hat{m}(X,t) + \hat{q}_{1-\alpha} \hat{s}(X,t)]$.
    \end{algorithmic}
    \label{alg:metd1}
\end{algorithm}

\begin{theorem}
    
For any function estimator of the regression function $m(\cdot,\cdot)$, $\hat{m}(\cdot,\cdot)$, invariant to permutations, and a random sample $\mathcal{D}_{n}= \{(X_{i}, Y_{i}, w_{i})\}^{n}_{i=1}$ that is exchangeable (assuming knowledge of $w_i$), the prediction region $\widehat{\mathcal{C}}_{n}^{\alpha}(X)$ for a new observation $X$, defined by Algorithm \ref{alg:met1dmissing}, satisfies:

\[
P(Y\in \widehat{\mathcal{C}}_{n}^{\alpha}(X))\geq 1-\alpha
\]

\end{theorem}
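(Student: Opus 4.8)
The plan is to collapse the functional coverage event onto a single scalar nonconformity score and then apply the (weighted) exchangeability argument that underlies split conformal prediction. First I would note that, by the band constructed in step~11, the event $\{Y\in\widehat{\mathcal{C}}_{n}^{\alpha}(X)\}$ holds if and only if $|Y(t)-\hat m(X,t)|\le \hat q_{1-\alpha}\,\hat s(X,t)$ for every $t\in[0,1]$. Dividing by $\hat s(X,t)>0$ and taking the supremum over $t$, this is exactly the scalar event $R_{\mathrm{new}}\le \hat q_{1-\alpha}$, where $R_{\mathrm{new}}=\sup_{t\in[0,1]}|Y(t)-\hat m(X,t)|/\hat s(X,t)$ is the same nonconformity functional of step~8 applied to the test pair. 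Hence the functional, sup-norm nature of the response contributes nothing beyond this reduction, and it suffices to prove $P(R_{\mathrm{new}}\le \hat q_{1-\alpha})\ge 1-\alpha$.

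Second, I would condition on the two training folds $\mathcal{D}_{train1}$ and $\mathcal{D}_{train2}$. Because $\hat m$ is fit only on $\mathcal{D}_{train1}$ and the modulation $\hat s$ only on $\mathcal{D}_{train2}$, after conditioning they are fixed, deterministic, permutation-invariant maps, independent of the calibration fold and the test point. The calibration scores $\{R_i:i\in\mathcal{D}_{calibration}\}$ together with $R_{\mathrm{new}}$ are then all images of their respective triples $(X,Y,w)$ under one common measurable functional, so the assumed exchangeability of the sample (with the weights $w_i$ treated as known) transfers verbatim to the augmented collection of scores-with-weights $\{(R_i,w_i)\}\cup\{(R_{\mathrm{new}},w_{\mathrm{new}})\}$.

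Third, and here is where the weights enter, I would invoke the weighted-quantile guarantee of weighted conformal prediction. The quantity $\hat q_{1-\alpha}$ is, by steps~9--10, the level-$(1-\alpha)$ quantile of the weighted empirical distribution $\widetilde G^{*}$ placing mass $w_i/\sum_j w_j$ on each observed ($\delta_i=1$) calibration score. The key fact is that, for a weighted-exchangeable family, $R_{\mathrm{new}}$ is stochastically dominated by this weighted quantile: averaging over the uniform distribution on permutations and using that the normalized inverse-probability weights are the correct likelihood ratios reweighting the selected ($\delta=1$) calibration distribution back to the target population, one obtains $P(R_{\mathrm{new}}\le \hat q_{1-\alpha})\ge 1-\alpha$. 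The main obstacle will be making this reweighting step rigorous: one must show that, under the missingness condition $Y\indep\delta\mid X$ and with the inverse-probability weights $w_i$, the selected calibration scores together with $R_{\mathrm{new}}$ satisfy the weighted-exchangeability definition (each permutation weighted by the product of the relevant propensities), so that the weighted empirical quantile of $\widetilde G^{*}$ controls the upper tail of $R_{\mathrm{new}}$. This is precisely the point at which the finite-sample guarantee requires the true $w_i$; with estimated propensities the identity only holds asymptotically, consistent with the text's remark that exact marginal coverage hinges on knowledge of the true weights.
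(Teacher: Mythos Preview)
Your approach is correct and is the canonical route for this type of result: reduce the functional band to a scalar score via the sup-norm (so that $Y\in\widehat{\mathcal{C}}_n^\alpha(X)$ iff $R_{\mathrm{new}}\le\hat q_{1-\alpha}$), condition on the two training folds to freeze $\hat m$ and $\hat s$, and then invoke the weighted split-conformal quantile lemma on the exchangeable collection of calibration scores together with the test score. The paper defers its proof to the Supplementary Material, but this is exactly the argument one expects there; nothing in the statement or the surrounding text suggests a different mechanism.

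One technical point worth tightening when you write it out in full: the finite-sample weighted conformal guarantee (in the sense of Tibshirani et al.) normally requires the empirical quantile $\hat q_{1-\alpha}$ to be computed from the calibration weights \emph{augmented} by a point mass at $+\infty$ carrying the test observation's normalized weight; without that extra mass the inequality $P(R_{\mathrm{new}}\le\hat q_{1-\alpha})\ge 1-\alpha$ can fail by a term of order one over the calibration size. Algorithm~\ref{alg:met1dmissing} as written omits this mass, so either interpret step~10 as including it implicitly, or note that the displayed bound holds up to that standard correction. Apart from this bookkeeping, your sketch is complete.
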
    

\begin{proof}    
Available in the Supplementary Material.
\end{proof}

\subsection{Personalized Imputation with Conformal Inference}
\label{sec:imput}

This paper's primary objective is to develop a rigorous mathematical framework for assessing the validity of imputed response values in patient data. For each patient $i$, we consider the scenario where the random response $Y_{i}(t)=\hat{Y}_{i}(t)$ for $t\in [0,1]$ is imputed, signified by $\delta_{i}=0$. The appropriateness of the imputation for each patient is evaluated in light of the associated uncertainty, encapsulated by the parameter $\hat{r}_{i}$. Specifically, for each patient $i$ and a given confidence level $\alpha\in (0,1)$, the uncertainty radius $\hat{r}_{i}$ is calculated as the maximum deviation across the interval $[0,1]$, defined by $\hat{r}_{i}= \max_{t\in [0,1]} |\hat{q}_{1-\alpha} \hat{s}(X,t)|,$ where $\hat{q}_{1-\alpha}$ represents the quantile associated with the confidence level $\alpha$, and $\hat{s}(X,t)$ denotes the standard deviation of the imputed values at time $t$.

For a given threshold $\gamma>0$, we define the set of imputed observations as follows:
\begin{equation}
    \mathcal{S}_{\delta}= \{i\in [n]; \delta_{i}=0 \text{ and } \hat{r}_{i} \leq \gamma \}.
\end{equation}
Our goal is to ascertain or estimate the optimal threshold parameter $\hat{\gamma}$ that yields high-quality imputations. This determination is based on an interval quality measure of the response $Y$ or a surrogate outcome $Z$, especially in cases involving binary events (like disease occurrence) or time-to-event responses (such as censored responses).

\subsubsection{Practical Implementation and Model Evaluation}

In practice, we evaluate a set of $m$ threshold values for $\gamma$, denoted as $\gamma^{m}= \{\gamma_1<\gamma_{2}<\dots<\gamma_{m} \}$. For each threshold value $\gamma_{s}$, we assess the performance of a statistical model $T$, which is constructed using observations from the set:
\begin{equation}
    \mathcal{B}_{\gamma_{s}}= \{i\in [n]; \delta_{i}=0 \text{ and } i\in \mathcal{S}_{\gamma_{s}} \}.
\end{equation}
The model at threshold $\gamma_{s}$ is then given by:
\begin{equation}
    T_{\gamma_{s}}=T(\{(X_{i}, \widehat{Y}_{i}): i\in \mathcal{B}_{\gamma_{s}}\} \cup \{(X_{i}, Y_{i}): i: \delta_i=1 \}).
\end{equation}

\subsubsection{Model Selection and Contextual Application}

The model $T$ encompasses a variety of statistical methods, including but not limited to regression models, logistic regression, and survival models. The choice of $T$ is contingent upon the nature of the data and the specific research question, aiming to capture the relationship between covariates $X$ and response variable $Y$ or surrogate outcome $Z$. 

In scenarios predicting binary events, $T$ may be a logistic regression model, whereas for time-to-event data, a survival analysis model may be more appropriate. The selection and evaluation of $T$ are thus context-dependent, guided by the specific objectives and characteristics of the study.

\subsection{Asymptotic Theory for Linear Imputation in Metric Spaces}

To clarify and enhance the presentation of the statistical justification for the consistency of mean imputation within a bounded metric space, denoted by $(\Omega, d)$, we revise the description and notation for improved readability. The foundational equations and assumptions are outlined as follows:

We define the functions:
\begin{equation}
M(\gamma, x) = \mathbb{E}\left[\omega(X, x)d^2(Y, \gamma)\right], \quad M_n(\gamma, x) = \frac{1}{n} \sum_{i=1}^n \omega_{in}(x)d^2(Y_i, \gamma).
\end{equation}
These functions are critical for assessing the effectiveness of mean imputation, with $M$ representing the expected metric deviation squared between observed values and an imputation parameter $\gamma$, and $M_n$ denoting its empirical counterpart based on a sample of size $n$.

The following assumptions are necessary for a fixed $x \in \mathbb{R}^p$:
\begin{itemize}
  \item[(P0)] Both the theoretical and empirical minimizers $\mu_p(x)$ and $\hat{\mu}_p(x)$ are confirmed to exist and be unique, with $\hat{\mu}_p(x)$ being almost surely unique. Moreover, for any $\epsilon > 0$, we ensure that $\inf_{d(\gamma, \mu_p(x)) > \epsilon} M(\gamma, x) > M(\mu_p(x), x)$, guaranteeing a unique minimum.
  \item[(P1)] There exists a lower bound $\epsilon > 0$ for the propensity score $\pi(x)$ for any fixed $x \in \mathbb{R}^p$, ensuring the practical applicability of the propensity score.
  \item[(P2)] The difference between the estimated propensity score $\hat{\pi}(x)$ and the true propensity score $\pi(x)$ diminishes at the rate of $\mathcal{O}_p(n^{-1/2})$, confirming the reliability of the propensity score estimation as the sample size grows.
\end{itemize}

Assumption (P0) is crucial for establishing the consistency of the $M$-estimator $\hat{\mu}_p(x)$, implying that the convergence of $M_n$ to $M$ in the empirical process ensures the convergence of their minimizers. The existence of these minimizers is straightforward if $\Omega$ is a compact set. Assumption (P1) introduces a necessary condition related to the propensity score, and (P2) addresses the accuracy of the propensity score estimate.

\begin{theorem}
Under assumptions (P0) to (P2) and with the condition that $\Omega$ is bounded, for any fixed $x \in \mathbb{R}^p$, the following convergence holds:
\begin{equation}
    d(\hat{m}(x), m(x)) = o_p(1),
\end{equation}
where $\hat{m}(x)$ and $m(x)$ represent the imputed and actual mean values, respectively. This equation demonstrates that the imputed means converge in probability to the actual means as the sample size increases, validating the consistency of the mean imputation method.
\end{theorem}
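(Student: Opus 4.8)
The plan is to treat this as a standard $M$-estimation consistency problem, in the spirit of the argmin (or argmax) continuous-mapping theorem (e.g.\ van der Vaart \& Wellner, Corollary 3.2.3), where the objective is the weighted Fréchet criterion $M_n(\cdot,x)$ and the estimand is its minimizer $\hat\mu_p(x)=\hat m(x)$. The convergence $d(\hat m(x),m(x))=o_p(1)$ will follow once I establish two ingredients: (i) the \emph{well-separation} of the population minimizer, which is exactly assumption (P0), i.e.\ $\inf_{d(\gamma,\mu_p(x))>\epsilon}M(\gamma,x)>M(\mu_p(x),x)$ for every $\epsilon>0$; and (ii) the \emph{uniform} convergence $\sup_{\gamma\in\Omega}|M_n(\gamma,x)-M(\gamma,x)|=o_p(1)$. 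Given (i) and (ii), any near-minimizer $\hat m(x)$ of $M_n$ must eventually land inside the $\epsilon$-ball around $m(x)$, which yields the claim since $\epsilon>0$ is arbitrary. The real work is therefore concentrated in step (ii), and in particular in controlling the plug-in nuisance quantities that enter the weights.

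First I would reduce (ii) to a convergence statement with the true nuisance parameters by inserting an intermediate objective $\widetilde M_n(\gamma,x)=\frac1n\sum_{i=1}^n \widetilde\omega_{in}(x)\,d^2(Y_i,\gamma)$, where $\widetilde\omega_{in}$ is built from the \emph{true} propensity $\pi$, mean $\mu$ and covariance $\Sigma$ in place of $\hat\pi,\overline X,\hat\Sigma$. Because $\Omega$ is bounded, $d^2(Y_i,\gamma)\le \operatorname{diam}(\Omega)^2$ uniformly in $i$ and $\gamma$, so the gap $\sup_\gamma|M_n-\widetilde M_n|$ is bounded by $\operatorname{diam}(\Omega)^2$ times the total variation of the weight vectors. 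This difference is driven by three errors: $\overline X\to\mu$ and $\hat\Sigma\to\Sigma$ almost surely by the strong law (using the finite-second-moment regularity of $X$ from \cite{petersen2019frechet} and invertibility of $\Sigma$), and $\hat\pi\to\pi$. The propensity error is the one genuinely new ingredient relative to the complete-data global Fréchet model: by (P1) the scores are bounded below by $\epsilon$, so $1/\hat\pi$ is bounded for large $n$ and $|1/\hat\pi-1/\pi|\le |\hat\pi-\pi|/\epsilon^2=\mathcal O_p(n^{-1/2})$ by (P2); continuity of the normalized weight map in $(\pi,\mu,\Sigma)$ then gives $\sup_\gamma|M_n-\widetilde M_n|=o_p(1)$.

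Next I would show $\sup_{\gamma\in\Omega}|\widetilde M_n(\gamma,x)-M(\gamma,x)|=o_p(1)$. The key identity is the missing-at-random unbiasedness $Y\indep\delta\mid X$, which gives $\mathbb E[(\delta/\pi(X))\,g(X,Y)]=\mathbb E[g(X,Y)]$ for integrable $g$; applied with $g(X,Y)=\omega(X,x)d^2(Y,\gamma)$ it shows that the inverse-probability-weighted population functional coincides with the complete-data Fréchet objective $M(\gamma,x)=\mathbb E[\omega(X,x)d^2(Y,\gamma)]$. Pointwise in $\gamma$ this is a law of large numbers; to upgrade it to uniformity over $\gamma\in\Omega$ I would use that $\gamma\mapsto d^2(Y,\gamma)$ is Lipschitz with constant $2\operatorname{diam}(\Omega)$ uniformly in $Y$ (via the triangle inequality $|d^2(Y,\gamma)-d^2(Y,\gamma')|\le d(\gamma,\gamma')\,(d(Y,\gamma)+d(Y,\gamma'))$), so the class $\{d^2(\cdot,\gamma):\gamma\in\Omega\}$ is Glivenko--Cantelli whenever $\Omega$ is totally bounded (guaranteed under the compactness remark following (P2)). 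A standard equicontinuity/covering argument then converts pointwise convergence on a finite $\epsilon$-net into uniform convergence.

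The main obstacle is the simultaneous control in step (ii) of the estimated propensity weights together with the uniformity over the metric-space argument $\gamma$: unlike the complete-data case the summands are not i.i.d.\ functions of a fixed weight, and the normalization $\sum_i\omega_{in}=1$ couples the propensity estimation error across observations, so the $\mathcal O_p(n^{-1/2})$ rate from (P2) must be propagated through the ratio defining $\omega_{in}$ and shown not to interact badly with the supremum over $\gamma$. Once the two uniform bounds above are in place, combining them with the well-separation (P0) through the argmin theorem delivers $d(\hat m(x),m(x))=o_p(1)$, completing the proof.
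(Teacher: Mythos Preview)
The paper does not actually supply a proof of this theorem: unlike Theorem~1, there is no proof environment and no pointer to the supplementary material, so there is nothing to compare your argument against. Your proposal follows the standard $M$-estimation route (argmin theorem plus uniform convergence of the empirical criterion), which is exactly how Petersen and M\"uller establish consistency for the complete-data global Fr\'echet estimator; the only genuinely new ingredient here is the propensity plug-in, and your decomposition $M_n\to\widetilde M_n\to M$ together with (P1)--(P2) is the natural way to isolate and control that error. In that sense the plan is sound.

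One technical point deserves care. Your Glivenko--Cantelli step relies on a finite $\epsilon$-net of $\Omega$, i.e.\ on $\Omega$ being \emph{totally} bounded. The theorem, however, only assumes $\Omega$ is bounded, and the ``compactness remark'' you cite after (P2) is a side comment about existence of minimizers, not an additional hypothesis. In a general separable metric space, bounded does not imply totally bounded (the closed unit ball of $\ell^2$ is a counterexample), so as written the covering argument has a small gap. You can close it either by adding total boundedness as an explicit assumption---which is almost certainly what the authors intend given their remark---or by replacing the covering argument with the weak-convergence machinery used in \cite{petersen2019frechet}, which obtains the needed uniformity from boundedness of $d$ together with the specific affine structure of the weights rather than from a net of $\Omega$. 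Either fix is routine; just be explicit about which one you invoke.
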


%Roughly speaking, our algorithm is based on two steps: i) massive use of univariate regressions; ii) projection of the predictions on the quantile functional space. To carry out the first step, we support the linear regression model implemented in Fortran of the \emph{R} survey package \cite{lumley2004analysis} that allows to introduce of splines into the regression model and handles the complex design of NHANES simultaneously. In the second step, we programmed an efficient quadratic solver in \emph{R} to handle the large size and dimensionality of the database used. In addition, the survey package also allows obtaining confidence intervals of the coefficients estimated in the first step that will be useful to construct confidence bands of the linear part of the model.

\section{Simulation Study}
In this Section, we investigate the performance of our proposed distributional imputation method via simulations. For this purpose, we consider the following data generating scenarios.

    \subsection{Generation of the simulated data}
    In this Subsection, we describe the data generation mechanisms for the simulation scenarios. We consider three following scenarios for the missing data mechanism. 
    %depending on the mechanism of the probability that the response is not missing:
\subsubsection*{Missing data mechanism}
    \begin{itemize}
			\item \textbf{Non-dependent}: The probability that the response is missing does not depend on the covariates. In particular, we set $P(\delta_i=0|X_i)=p_i=0.5$.
   
			\item \textbf{Linear}: The probability that the response is observed (not missing) depends on a linear combination of the covariates as $logit(P(\delta_i=1|X_i))= X_i^T\beta$, where $logit(x)=log(\frac{x}{(1-x)})$. Further, we considered scenarios
   with $p_X\in\{1,2,5\}$, denoting the number of scalar predictors. The models for missing responses ($\delta_i=0$ denotes missing) in each case are as follows:
   \begin{itemize}
				\item \textbf{1 covariate}:
					\[logit(P(\delta_i=1|X_i)) = -0.75 + 1.55X_i\]
				\item \textbf{2 covariates}:
					\[logit(P(\delta_i=1|X_i))= -0.75 + 1.89X_{1i} - 0.37X_{2i}\]
				\item \textbf{5 covariates}:
					\[logit(P(\delta_i=1|X_i)) = -0.75 + 0.82X_{1i} - 0.37X_{2i} + 0.09X_{3i} +0.53X_{4i} + 0.75X_{5i}\]
			\end{itemize}
			
\item \textbf{Non-linear}:  The probability that the response is not missing is a non-linear function of the covariates. The models for missing responses ($\delta_i=0$ denotes missing) for varying number of covariates are:
\begin{itemize}
                \item \textbf{1 covariate}:
                    \[logit(P(\delta_i=1|X_i) = -0.75 + 2.55X_i^2\]
                \item \textbf{2 covariates}:
                    \[logit(P(\delta_i=1|X_i) = -0.75 + 1.89X_{1i}^3 - 0.37X_{2i}^2 + 0.75X_{1i} \]
                \item \textbf{5 covariates}:
                    \[logit(P(\delta_i=1|X_i) = -0.75 + 1.12X_{1i}^3 - 0.37X_{2i}^2 + 1.09X_{3i} +0.1\sin(2\pi X_{4i}) + 0.75\cos(2\pi X_{5i})\]
            \end{itemize}

		\end{itemize}
		
		In all three cases above, the mean probability that the response is missing is $0.5$. Also, the main data generation mechanism for the distributional outcome and scalar covariates is the same for all scenarios and described below. 
		
	\subsubsection*{Data generation mechanism}

		The data generation mechanism considered for the distributional outcome $Y_i(t)$ and the scalar predictors $X_i$ is same across all the scenarios. The scalar covariates are independently generated as $X_{ij} \sim \operatorname{U}[0, 1]$, $j \in \{1, \dots, p_X\}$ and $i \in \{1, \dots, n\}$. Let $T=  [0, 1]$ be the quantile grid, which in this case is composed by 50 equidistant points in $[0,1]$. The distributional response $Y_i(t)$ are generated as
\begin{equation}
    Y_{i}(t) =  \sum_{j = 1}^{p_X}X_{ij}\beta_j(t)+ \frac{\sigma_{lp}}{\sqrt{(SNR)}}\varepsilon_{i}.
\end{equation}
The signal-to-noise-ratio ($SNR$) was set to $SNR = 30$.
	%	\begin{enumerate}
%			\item We generate the linear predictors (i.e., no error):
%				\begin{equation}\label{eq:sim_lp}
%				    \mathbf{lp}_{i}(t) = \sum_{j = 1}^{px}{X_{ij}}t.
%				\end{equation}
%			\item We add the error term. This error is the same for all points of each individual, and it is based on controlling the signal-to-noise-ratio ($SNR$), which is an empirical variance of the additive predictor divided by variance of the errors, and was set $SNR = 30$:
%				\begin{equation}\label{eq:sim_outcome}
%                    \mathbf{Y}_{i}(t) = \mathbf{lp}_{i}(t) + \frac{\sigma_{lp}}{\sqrt{(SNR)}}\varepsilon_{i},
 %               \end{equation}
Here $\sigma_{lp}$ is the empirical standard deviation of the linear predictor $\sum_{j = 1}^{p_X}X_{ij}\beta_j(t)$, and $\varepsilon_{i} \sim \operatorname{N}(0, 1)$ independently. We set $\beta_j(t)=t$ across all the covariates. Four different sample size $n \in \{500, 1000, 2000, 5000\}$ are considered for this simulation study across all possible combination of scenarios (missing data mechanism and $p_X$).

%The use of SNR for errors in the generation of random functional data is a technique already used in the \texttt{refund} R package \citep{goldsmithrefund}. 
		%It is worth noting that the, for a specific number of covariates considered $px$, the data bases are the same for all the different generation mechanisms of the probability that the response is not missing, the only thing that changes is this probability. For this simulations, we forced the covariates to follow a $\operatorname{U}[0, 1]$, and use positive coefficients for the covariates to ensure that the outcome $Y$ was always increasing, since this methodology is intended to be applied on quantile functions.
In Figure~\ref{fig:sim} we display the trajectories of the distributional outcomes for one simulated dataset with sample size 500 for 1 (left panel), 2 (middle panel), and 5 covariates (right panel), which can all noticed to be non-decreasing.
\begin{figure}[!ht]
            \centering
                \includegraphics[width = 1\linewidth, height = 0.5\linewidth]{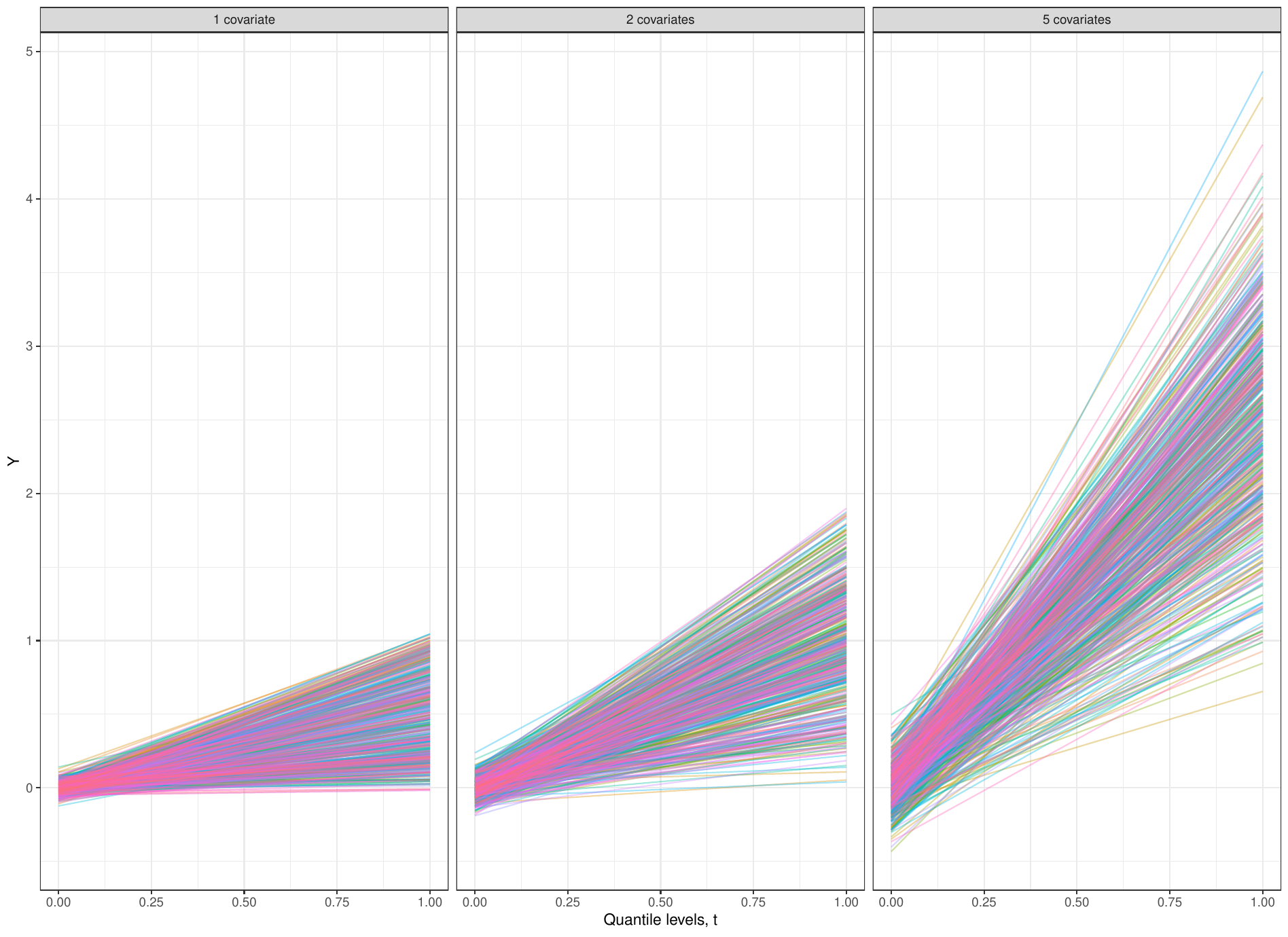}
                \caption{\footnotesize Trajectories of the distributional outcomes $Y_i(t)$ for one simulated dataset with sample size 500 for 1 (left panel), 2 (middle panel), and 5 covariates (right panel).}
                \label{fig:sim}
            \end{figure}

\subsection{Simulation Results}
 As explained in Section 4, the WLS estimator from the the global Fréchet model includes inverse probability weighting in the estimations, these weights being proportional to the probability that the response is not missing. We estimate $\hat{P}(\delta_i=1|X_i)$ using a generalized additive model (GAM) for the binary response $\delta$. The \texttt{mgcv} R package \citep{wood2017generalized} is used for the GAM implementation. Finally, for each subject $i \in \{1, \dots, n\}$, the weights are defined as: 
            \begin{equation}\label{eq:weights_sim}
                w_i = \frac{\mathbf{1}_{\delta_i = 1}}{\hat{P}(\delta_i=1|X_i)} = \begin{cases}
                    0 & \text{if the distributional outcome is missing}, \\
                    \frac{1}{\hat{P}(\delta_i=1|X_i)} & \text{if the distributional outcome is known},
                \end{cases}
            \end{equation}

    Next, we impute the  missing responses using the WLS global linear Fréchet model (8). The final imputed values for the missing responses $\hat{Y}_i(t)$ are obtained using the closed–projection algorithm for $2-$ Wasserstein Metric illustrated in Section 4.2. Finally, we evaluate marginal coverage of the proposed conformal inference algorithm on the dataset with missing responses (where $\delta_i=0$) for a $95\%$ confidence region, i.e., $\alpha=0.05$.

 %\subsubsection{Evaluation of the performance of the simulations}
%As previously explained, we imputed the missing responses and estimate the confidence intervals using conformal inference. We considered that the estimated confidence interval completely covered a curve if all points on the curve are in that interval. Although we have calculated the 95th percentile in all cases to compute the confidence, the coverage provided by them was not always 95\%. 
Figure ~\ref{fig:coverage} displays the distribution of estimated coverage across all simulation scenarios. When the mechanism of missing response does not depend on the covariates (Non-dependent scenario, top panel), the median coverage is close to the nominal coverage of $0.95$, regardless of the sample size or the number of covariates. The variability in the estimated coverage, decrease with increasing sample size. In contrast, for scenarios where the missing response mechanism depends on the covariates (linearly or non-linearly), the median coverage is somewhat lower, but always higher than 0.9 and increasing with sample size. The drop in the estimated coverage in these cases is expected, as 
we are using estimated weights instead of actual weights, and it is known that conformal inference is not accurate for cases where the exact weights are unknown and must be estimated. Nonetheless, this reduction is small, particularly for higher sample sizes, and the nominal coverage rate of $95\%$ lies within the two standard error limit of the average estimated coverage. 
%but even so the reduction is very small. 
            
\begin{figure}[H]
\centering
\includegraphics[width = \linewidth,height = 0.7\linewidth]{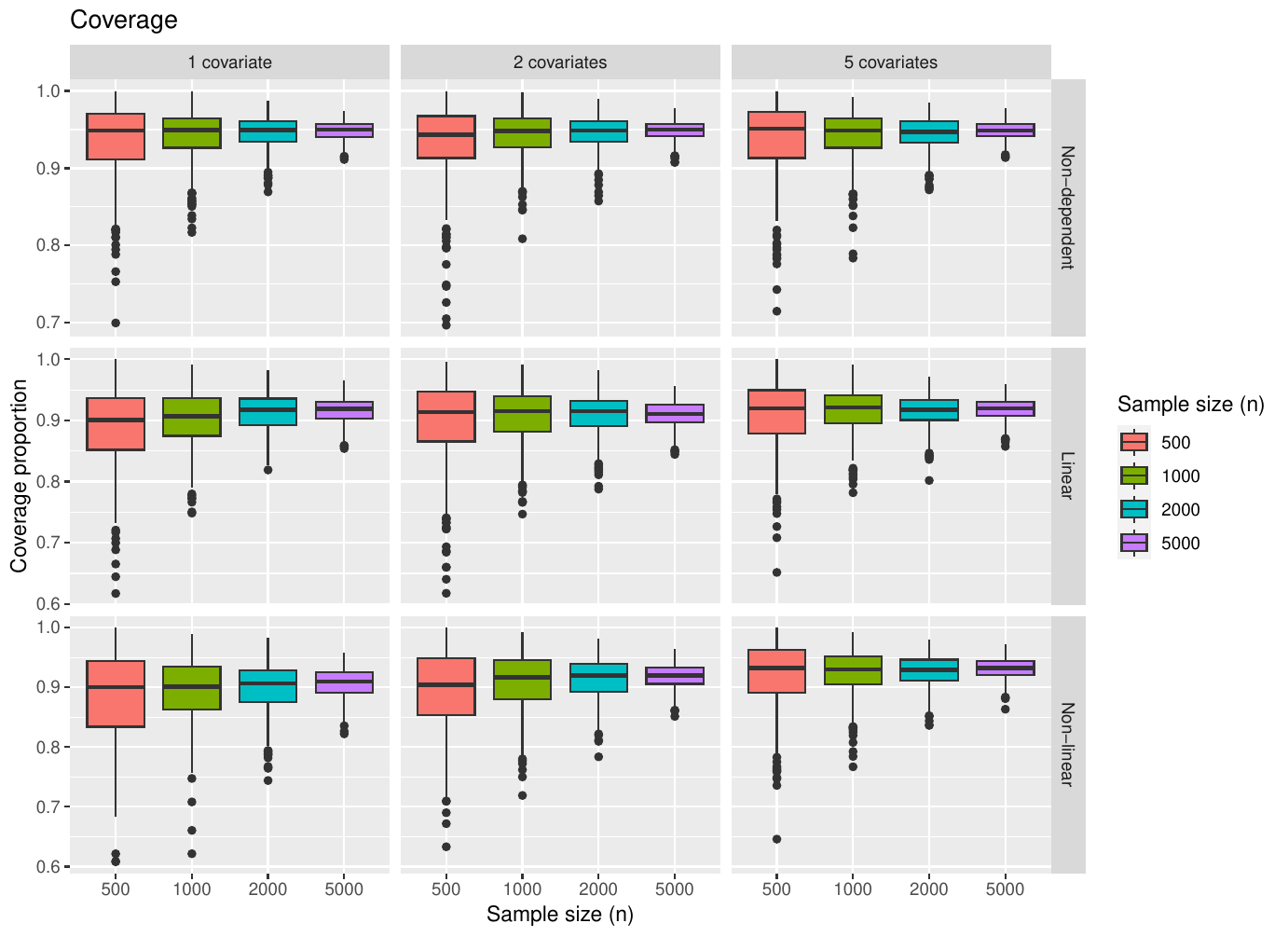}
\caption{\footnotesize Boxplots of the estimated coverage provided by the conformal inference algorithm for 500 M.C replications across every simulation scenario.}
 \label{fig:coverage}
\end{figure}
%\textcolor{red}{We don't need to include everything from here on. Is there simulation  on quality of imputation? For figures, change time to $t$, time is not correct label.}

The prediction performance of the WLS global linear Fréchet model (4) was evaluated by means of in-sample $R^2$ and out of sample Root Mean Squared Error (RMSE). The distribution of $R^2$ across all the simulation scenarios are displayed in Figure ~\ref{fig:r2}. The prediction performance appears to be pretty robust, the median $R^2$ being higher than 0.8, with a low variability with increasing sample size. The more complex the data generating mechanism, the lower the $R^2$, which is expected. 
%\begin{equation}\label{eq:r2}
%                R^2 = 1 - \frac{||\bold{Y}_{tr} - \bold{\hat{Y}}_{tr}||^2}{||\bold{Y}_{tr} - \bold{\bar{Y}}_{tr}||^2},
%            \end{equation}
%            \noindent where $\bold{\hat{Y}}_{tr}$ is the matrix with the estimations over the train dataset obtained by the model and $\bold{\bar{Y}}_{tr}$ is a mean matrix with the man response vector of $\bold{Y}_{tr}$. The main insight we obtained from the experiments is that the method seems to be pretty robust, since, as we can see in Figure~\ref{fig:r2}, the median $R^2$ of all 500 runs in every scenario is always higher than 0.8, with a narrow interquartile range. Again, and as expected, the more complex the mechanism of data generation, the lower the $R^2$. 
\begin{figure}[!h]
                \centering
                \includegraphics[width = 0.8\linewidth,height = 0.5\linewidth]{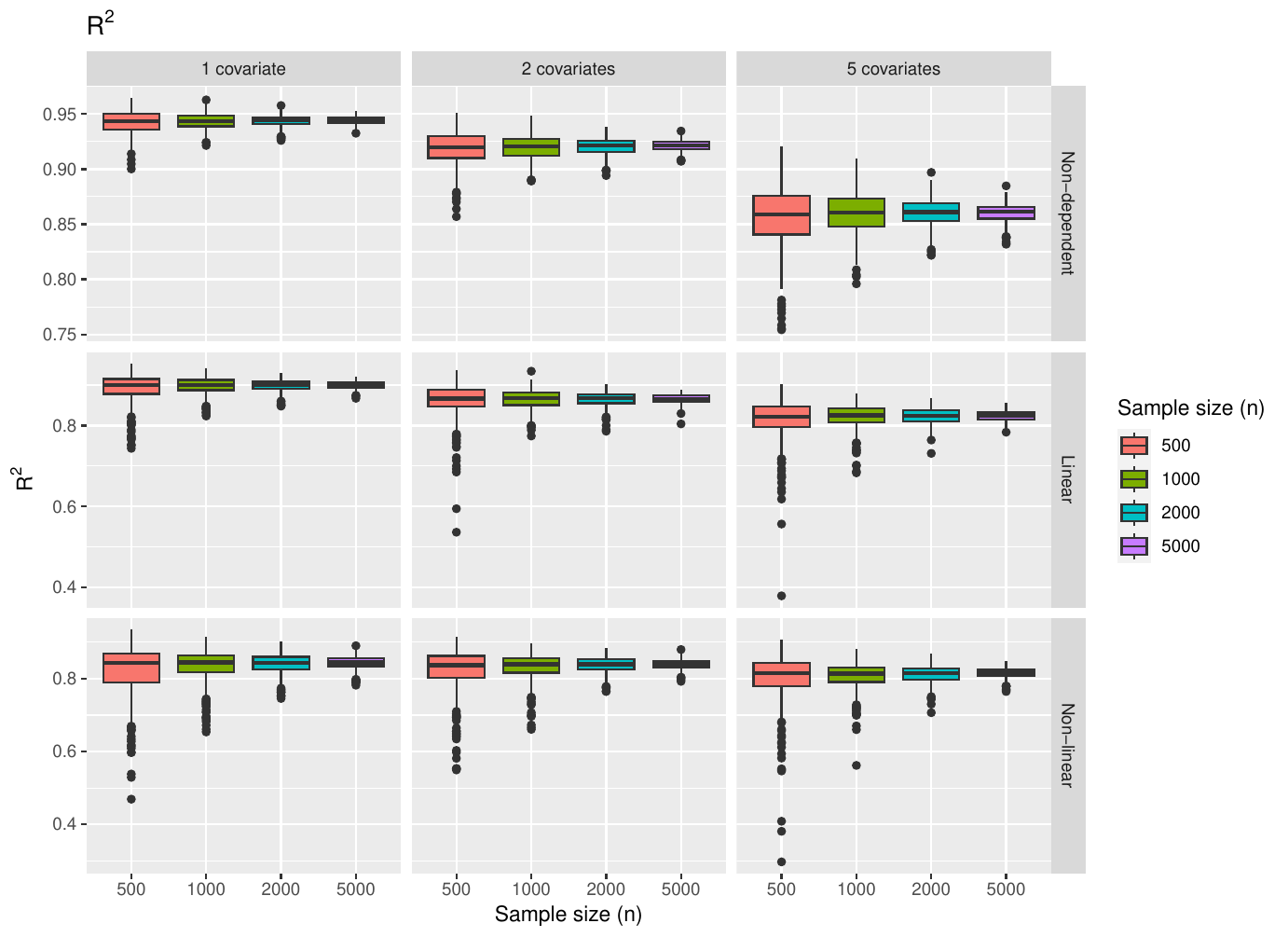}
                \caption{\footnotesize Distribution of the $R^2$ from the global linear Fréchet model across every simulation scenario.}
                \label{fig:r2}
            \end{figure}
The Root Mean Squared Error (RMSE) on the test data ($\delta_i=0$) was computed as, 
            \begin{equation}\label{eq:rmse}
                RMSE = \sqrt{\frac{1}{50|I_{test}|}\sum_{i \in I_{test}}{\sum_{t = 1}^{50}{(Y_{i}(t) - \hat{Y}_{i}(t))^2}}},
            \end{equation}
            \noindent where $|I_{test}|$ is number of subjects in the test set. Distribution of the Root Mean Squared Error (RMSE) are shown in Figure~\ref{fig:rmse}.
            \begin{figure}[!h]
                \centering
                \includegraphics[width = 0.8\linewidth,height = 0.5\linewidth]{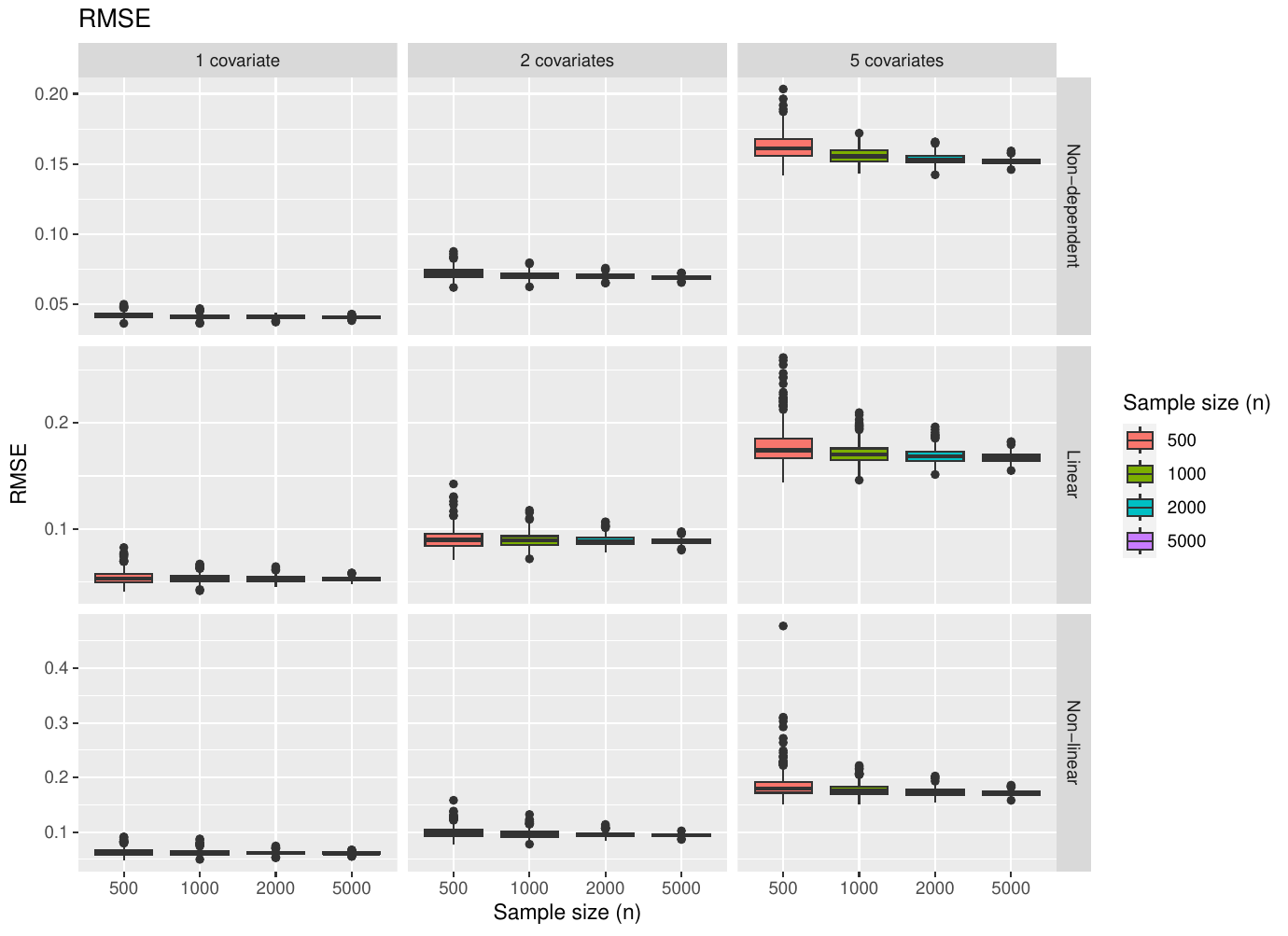}
                \caption{\footnotesize Distribution of the Root Mean Squared Error (RMSE) from the global linear Fréchet model model across every simulation scenario.}
                \label{fig:rmse}
            \end{figure}
Across all cases, the RMSE is small and appears to be decreasing with increasing sample size. Comparing the results across different number covariates, it should be noticed that the variability in the distributional outcome increase with increasing number of covariates as evident from Figure \ref{fig:sim}. Focusing on the sample size $n=500$, for the 1 covariate scenario, the expected RMSE from data generation is approximately 0.041, the estimated median RMSE of the non-dependent scenario is 0.042, while for the linear one it is 0.053 and for the non-linear one (the most complex) it is 0.061. Similarly, for the case of 2 covariates ($n=500$), the expected RMSE was 0.069, the estimated median RMSE across the non-dependent, linear and non-linear scenario were 0.072, 0.090, 0.097 respectively. Finally, for the case of 5 covariates ($n=500$), the expected RMSE was 0.069, the estimated median RMSE across the non-dependent, linear and non-linear scenario were 0.151, 0.161, 0.180 respectively. Overall, we see that the error increase with increase in complexity in the data generation mechanism, but, except for a few outliers, the proposed imputation method provide a robust performance.

\section{Modeling time to diabetes using distributional CGM information}

Diabetes is a complex metabolic disorder where the body struggles to regulate insulin effectively, leading to inconsistent blood sugar levels. It primarily presents in two forms: Type 1 and Type 2. Factors such as genetics, lifestyle, and environmental conditions play crucial roles in the development of diabetes. With an increase in sedentary lifestyles and an aging population, the incidence of diabetes is on the rise. This trend highlights the critical need for innovative public health strategies that focus on early detection and tailored management of the disease.

This section explores the creation of advanced statistical models aimed at predicting diabetes onset by analyzing individual glucose regulation profiles. These profiles are carefully constructed using data from continuous glucose monitoring (CGM) systems over a week. CGM data provides an in-depth analysis of blood sugar variations over time, offering insights beyond conventional health metrics. We utilize a approach called glucodensity, detailed in Section \ref{ss:quantile_representations}, to encapsulate CGM data insights. Nevertheless, CGM data was not available for all participants. To address this, we developed a strategy for imputing missing CGM data, described in Section \ref{sec:imputation}. This method is further refined with personalized adjustments (Section \ref{sec:imput}), improving the accuracy of our prediction model.

Our analytical pipeline has the following steps. Initially, we use a generalized additive model to evaluate the probability of missing data, leading to the calculation of missing data weights (see Section \ref{s:estimation}). Then, we proceed to the imputation phase, incorporating inverse-probability weighting estimators (Section \ref{sec:imputationres}). After imputing data—combining both imputed and original data—we test the model's effectiveness in predicting diabetes onset using Cox models (Section \ref{s:survival}). These models blend CGM and non-CGM data to examine the impact of CGM information on prediction precision through time-dependent ROC curves. By applying personalized imputation criteria (Section \ref{sec:personalized}), we further enhance the model's predictive performance and explore the potential of CGM data in accurately forecasting diabetes onset.

\subsection{Imputing missing Glucodensities using inverse probably weighting estimator}\label{s:estimation}

   % In this Section we show the missing data model fit for the quantiles of the glucodensities on the 580 AEGIS-I subjects at baseline. As there are a total of 1516 subjects, once we fitted the model, we imputed the quantiles of the glucodensities on the remaining 936 subjects.

   % We are going to fit the missing data model using the baseline information from the 1516 subjects in the AEGIS-I study. The variables that we consider to the imputation  are ``Age'', ``Sex'', glycated haemoglobin (``HbA1c''), fasting plasma glucose (``FPG''), ``Smoking'' habits, ``Albumin'', and body mass index (``BMI'').

    \subsubsection{Computing the missing-data weights}
%\begin{table}[!h]
 %   \centering
  %  \caption{\footnotesize Proportion of subjects with CGM in AEGIS-I.
   % \label{tab:cgm}}
    %\begin{tabular}{l|c}
     %   CGM     & Proportion \\
      %  \hline
       % 0 = No  & 0.617      \\
        %1 = Yes & 0.383      \\
    %\end{tabular}
%\end{table}

   \begin{figure}[!h]
	   \centering
	   \includegraphics[width = 0.8\linewidth]{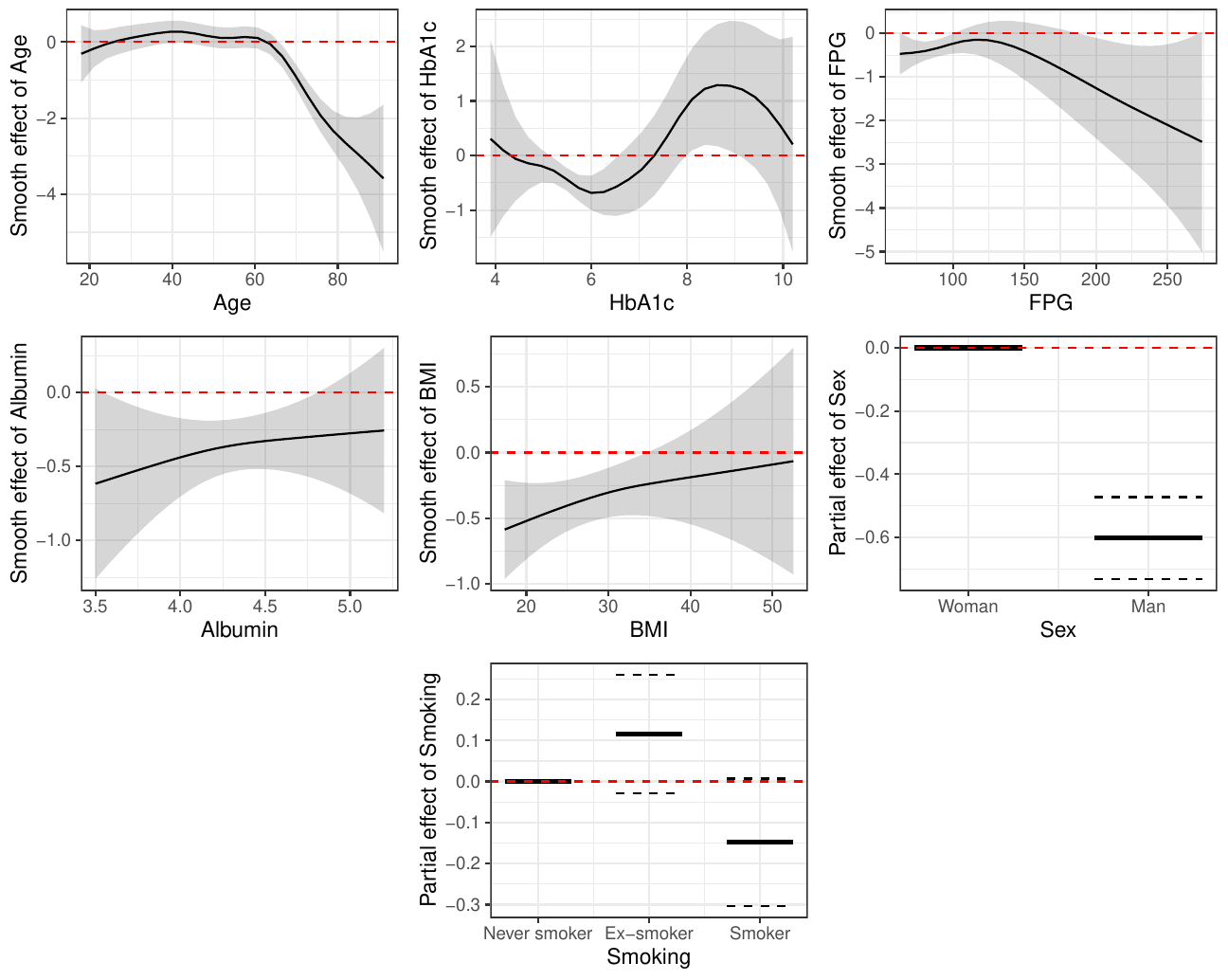}
	   \caption{\footnotesize Smooth effect of ``Age'', ``HbA1c'', ``FPG'', ``BMI'', and ``Albumin'' on the probability of having CGM.}
	   \label{fig:mod_pesos}
        \end{figure}

As a preliminary step in our analysis, we categorize each participant with a binary label: ``0'' indicating the absence of Continuous Glucose Monitoring (CGM) data and ``1'' for those with available CGM data. Interestingly, of the total 1516 subjects in our study, 580 AEGIS-I subjects (\textit{38\%}) were equipped with a CGM device at baseline.

Our next objective is to estimate the probability of the availability of CGM data, considering variables such as Age, Sex, HbA1c, FPG (Fasting Plasma Glucose), Smoking habits, Albumin levels, and BMI (Body Mass Index). To accomplish this, we employ a Generalized Additive Model (GAM) with a logistic link function, as proposed by \cite{wood2017generalized}, which can be mathematically represented as follows:

\begin{equation} \label{eq:mod_pesos}
   \log\left(\frac{\mathbb{P}(CGM=1|X=x)}{1-\mathbb{P}(CGM=1|X=x)}\right) \sim s(\text{Age}) + \text{Sex} + s(\text{HbA1c}) + s(\text{FPG}) + \text{Smoking} + s(\text{BMI}) + s(\text{Albumin}),
\end{equation}

\noindent where $s(\cdot)$ indicates a smooth function for each variable, estimated using thin-plate regression splines, and $CGM \in \{0,1\}$ represents the binary status of CGM usage. This model demonstrate a moderate predictive power, with a deviance explained of $10.14\%$ and an adjusted $R^2$ of 0.11. Figure~\ref{fig:mod_pesos} depicts the influence of these variables, highlighting a notable decrease in CGM participation beyond the age of 65. Conversely, participation likelihood increases for individuals with elevated HbA1c values and is higher among women. These observations suggest that we are dealing with a "Missing Not At Random" (MNAR) data scenario.

Finally, based on the predictions from the above GAM, we compute the weights for the $i$-th patient as follows $
    \omega_{i} = \frac{\delta_{i}}{n \cdot \hat{\pi}(X_{i})}$, where $X_{i}$ denotes the characteristics of the $i$-th patient, and $\hat{\pi}(X_{i})$ represents the estimated conditional probability $\mathbb{P}(CGM=1|X_{i})$.
     
         \subsubsection{Fitting the global Fréchet model from missing responses} \label{sec:imputationres}
 \begin{figure}[ht!]
	   \centering
	   \includegraphics[width = 0.8\linewidth]{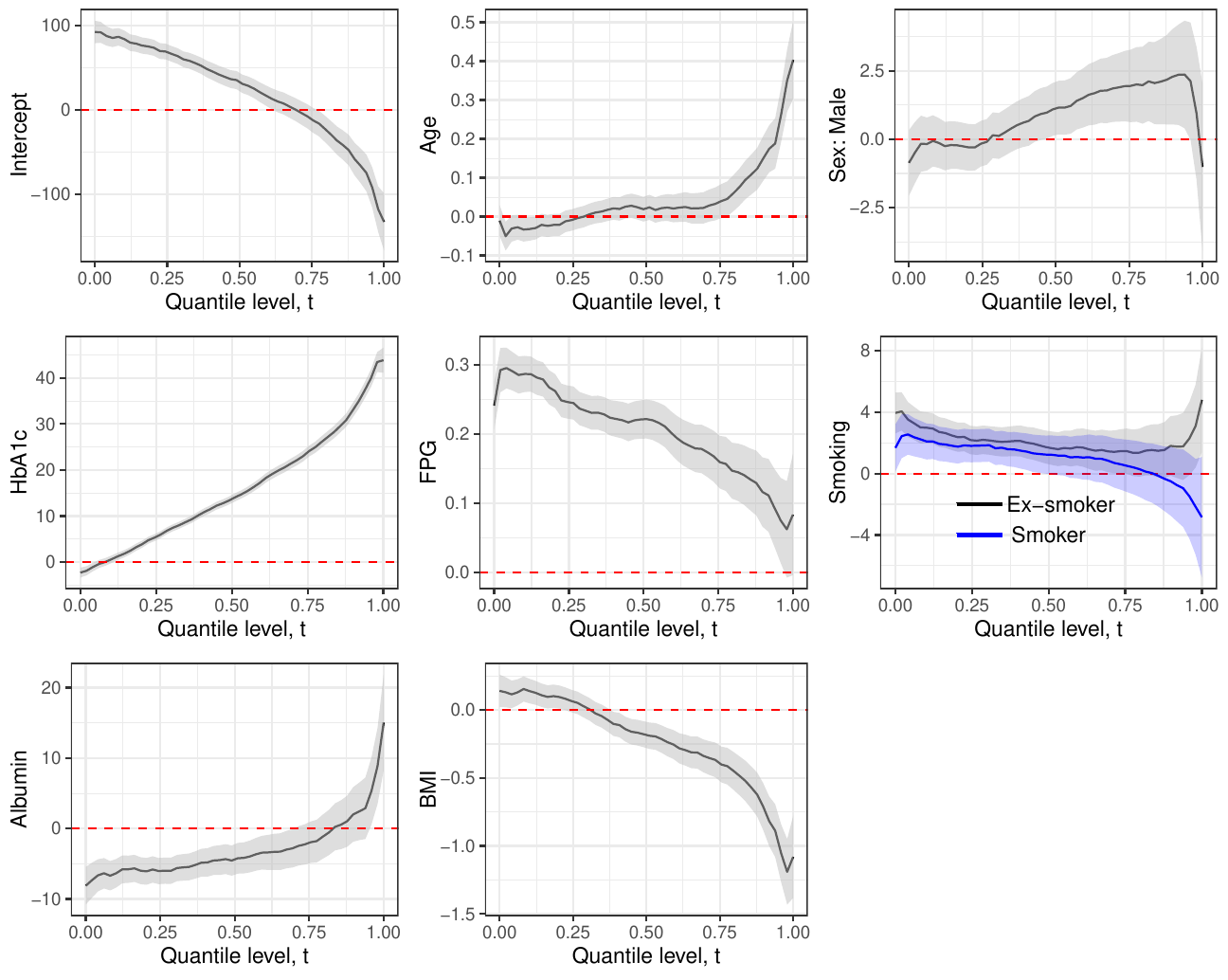}
	   \caption{\footnotesize Estimated functional $\beta-$ coefficients of the Fréchet weighting estimator.}
	   \label{fig:m1}
        \end{figure}

        \begin{figure}[ht!]
	   \centering
	   \includegraphics[width = 0.8\linewidth]{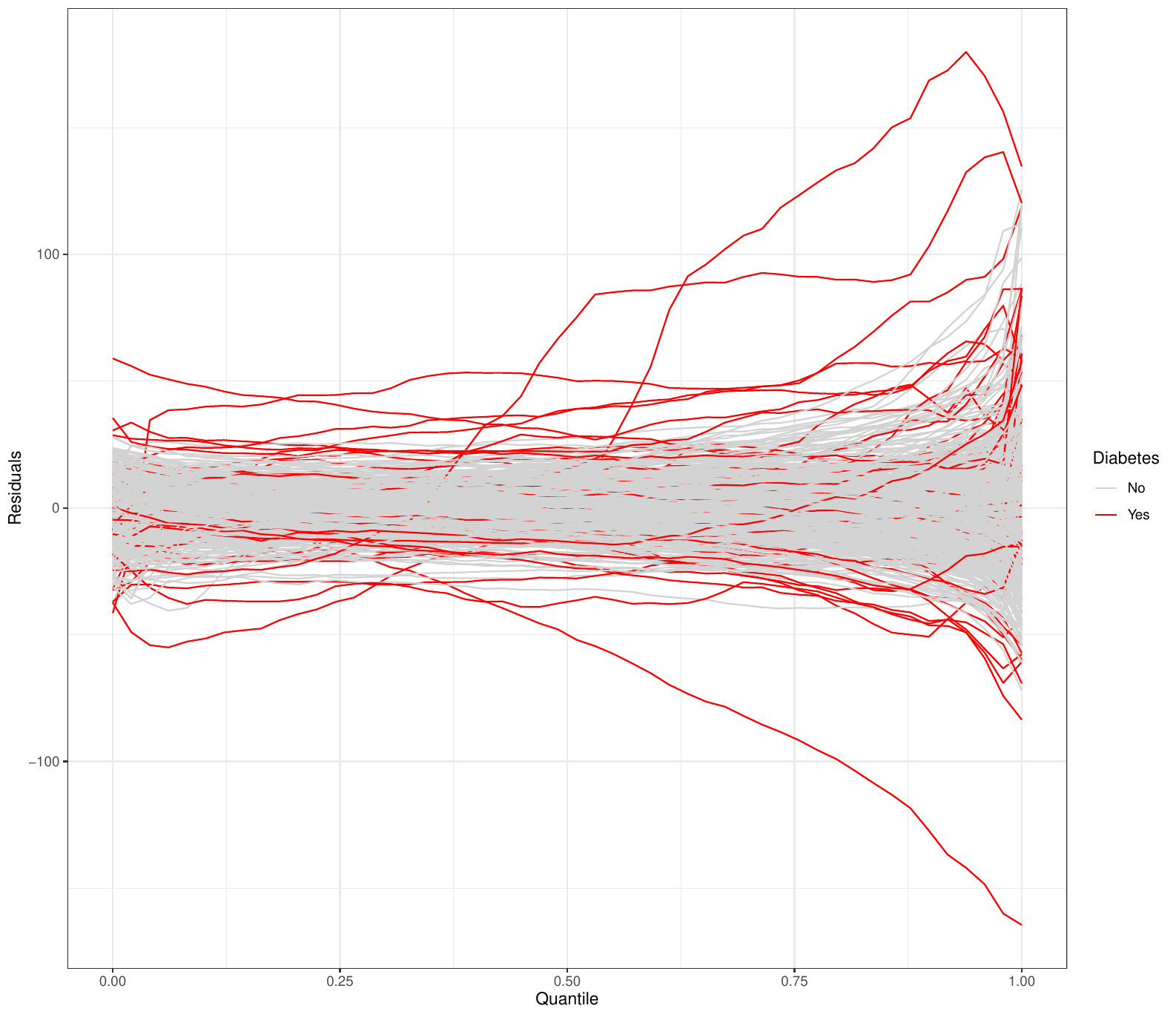}
	   \caption{\footnotesize Residuals for the individuals with CGM measurements. Individuals with diabetes are shown in red.}
	   \label{fig:residuals_diabetes}
        \end{figure}

   For each participant, labeled as the $i$-th individual, we calculate their glucose quantile representation, $Y_{i}(\rho) = \hat{Q}_{i}(\rho)$, over a spectrum of 101 evenly distributed points, $\tau = \left\{ \frac{j}{100} : j = 0, 1, \ldots, 100 \right\}$. This is achieved by leveraging the empirical distribution, $\hat{F}_{i}(t) = \frac{1}{n_{i}} \sum_{j=1}^{n_i} I\{G_{ij} \leq t\}$, which is based on their glucose readings. After establishing the weights and glucodensity quantiles we proceed with the 2-Wasserstein weighted Fréchet regression model based on the covariates: Age, Sex, HbA1c, FPG, Smoking habits, Albumin levels, and BMI. This model demonstrates a coefficient of determination ($R^2$) of 0.63, indicating a significant portion of the variance is explained. The model's coefficients varying over quantile levels are displayed in Figure~\ref{fig:m1}.

Figure~\ref{fig:residuals_diabetes} showcases the residual patterns for each participant with CGM data. Specifically, individuals diagnosed with diabetes at the beginning of the study are marked in red, indicating notably higher residuals for this group. This suggests increased uncertainty in predicting glucodensity quantiles among these participants. Following this, in additional analysis presented in supplementary material, we applied a weighted Fréchet regression model to examine the squared residuals, linking them with the same variables previously considered in the Fréchet regression model. The purpose of this analysis was to probe into the conditional variability, and employ a framework of conformal prediction to enhance our understanding of the underlying patterns.

%\textcolor{red}{Refer to the results of this in supplement? Also, do we have confidence bands from Frechet model? would be good to shows this picture.}

\subsection{Survival analysis following imputation in AEGIS} \label{s:survival}
In this Section, we go a step further and take the follow-up time into account, to estimate the probability of time-to-diabetes using survival models.  To do so, we are sticking only to the 1293 individuals without diabetes which also have a follow-up time of more than half a year. We used the Fréchet regression model to obtain the CGM-imputed and non-imputed quantile CGM profiles, \(Y_{i}(t)\), for \(t \in [0,1]\) and \(i = 1, \dots, n = 1293\) among which 789 individuals ($61\%$) have imputed CGM. The subjects are followed up approximately over a 10 year period, with median follow-up time being 7.4 years. Among the initial sample 75 individuals developed diabetes ($5.8\%$), by the end of the study. The goal of this subsection is to estimate the risk of diabetes over time as a function of the functional principal components (fPC) of the glucodensities, adjusting or not for other covariates, using an additive Cox regression model. 
%\textcolor{red}{Provide more details on the survival outcome on how many events, median follow up time etc.} 

        \subsubsection{Survival models with all imputed patients and CGM information}

\begin{figure}[ht!]
	       \centering
	       \includegraphics[width = 0.8\linewidth, height= 0.6\linewidth]{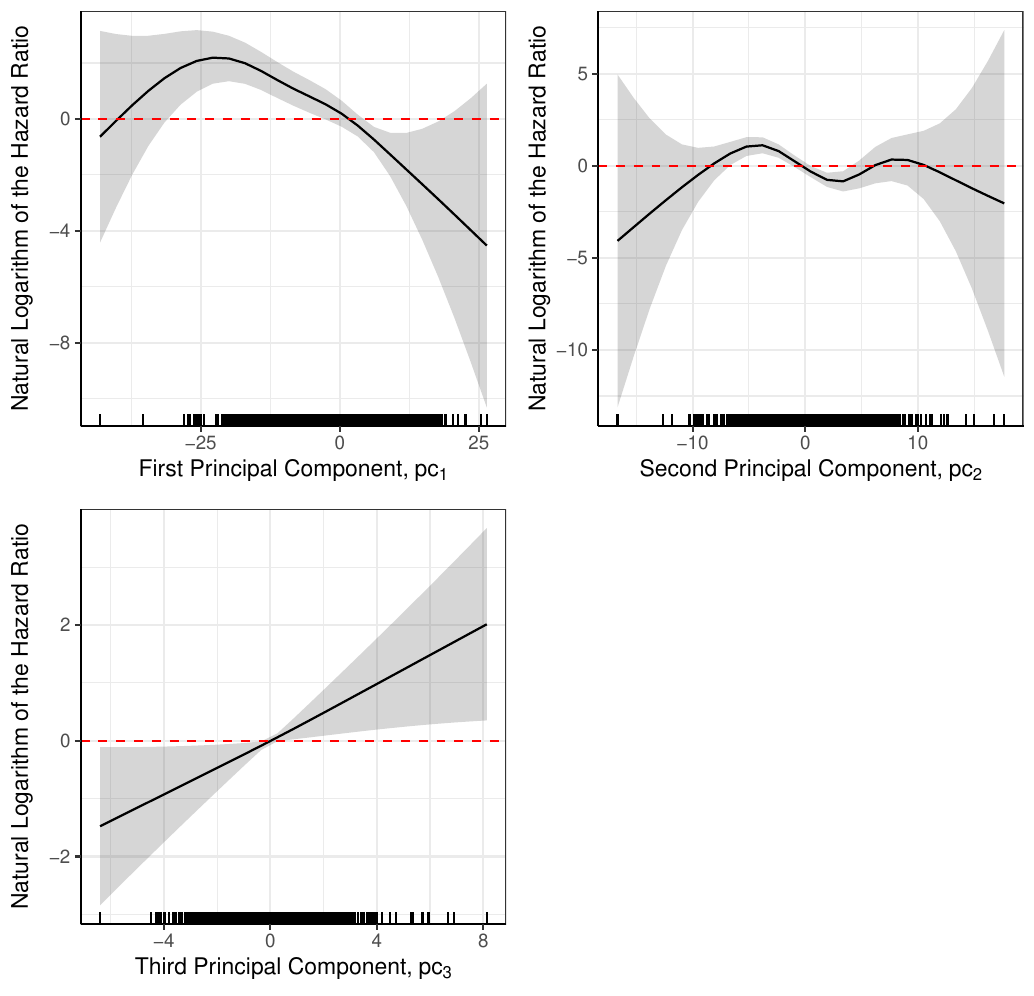}
	       \caption{\footnotesize Smooth effect of the covariates of model that only incorporate CGM  information.}
	       \label{fig:cox_pca_simple}
            \end{figure}

\begin{figure}[ht!]
	       \centering
	       \includegraphics[width = 0.8\linewidth,height= 0.6\linewidth]{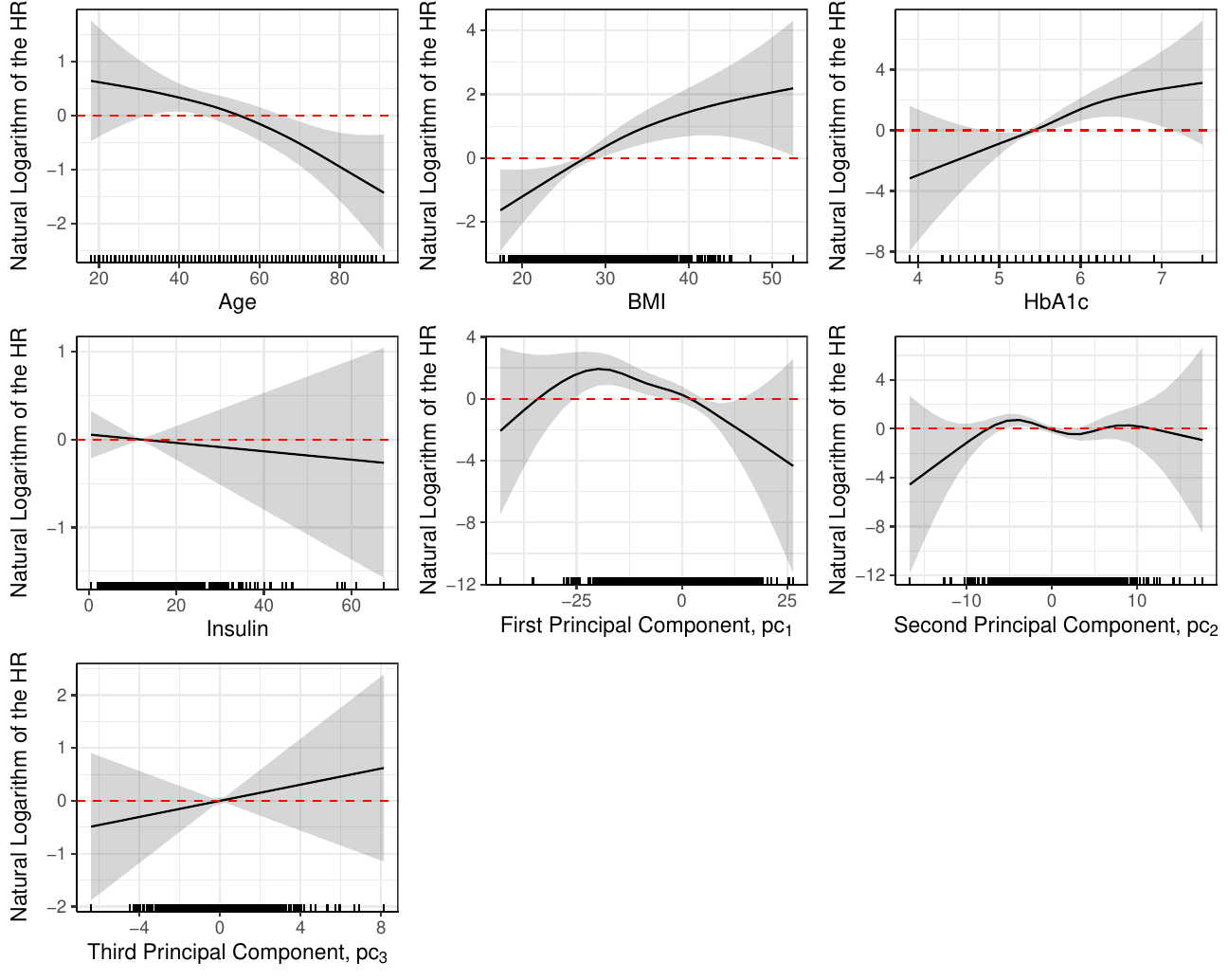}
	       \caption{\footnotesize Smooth effect of the continuous covariates of model with CGM and non CGM information).}
	       \label{fig:cox_pca}
            \end{figure}

In our comparative analysis, we start with a time-to-diabetes model, utilizing Continuous Glucose Monitoring (CGM) data exclusively. Specifically, we use both CGM-imputed and non-imputed quantile CGM profiles, \(Y_{i}(t)\), for \(t \in [0,1]\) and \(i = 1, \dots, n = 1293\), as functional predictors and extract the first three functional principal component (fPC) scores that account for 98\% of the data variability. Subsequently, we adopt an additive Generalized Additive Model (GAM) Cox model as delineated in the `mgcv` package in R, fitting the time to diabetes onset over a 9-year follow-up. We incorporate the three principal components, \(pc_{j}\), \(j = 1,2,3\), in the Cox regression model as follows:
\[
h_{i}(t) = h_0(t) \exp\left\{s(pc_{i1}) + s(pc_{i2}) + s(pc_{i3})\right\},
\]

\noindent where \(h_{i}(t)\) signifies the hazard function over time for the \(i\)-th subject, and \(s(\cdot)\) denotes the smooth function of each covariate. The influence of the three principal component scores on the risk model is displayed in Figure \ref{fig:cox_pca_simple} reveals non-linear dynamics, with the third score, in particular, showing a linear escalation in diabetes risk. An extended model also incorporates the demographic and clinical covariates Sex, Age, BMI, HbA1c, and Insulin—alongside the PCA scores, and is given by
\[
h_{i}(t) = h_0(t) \exp\left\{Sex_i + s(Age_i) + s(BMI_i) + s(HbA1c_i) + s(Insulin_i) + s(pc_{i1}) + s(pc_{i2}) + s(pc_{i3})\right\}.
\]
The estimated effects are displayed in Figure \ref{fig:cox_pca} and underscore the nonlinear effects of PCA scores and reveals an increased diabetes risk associated with higher BMI and HbA1c levels. Intriguingly, the risk diminishes with age, suggesting a protective effect against diabetes in individuals maintaining a non-diabetic state and favorable metabolic health over time.

%\textcolor{red}{We might need to show fpc functions later/ in supplement}

 \subsubsection{Comparing CGM and non-CGM models in terms of AUC over time}

        \begin{figure}[!h]
	    \centering
	    \includegraphics[width = 0.6\linewidth,height= 0.6\linewidth]{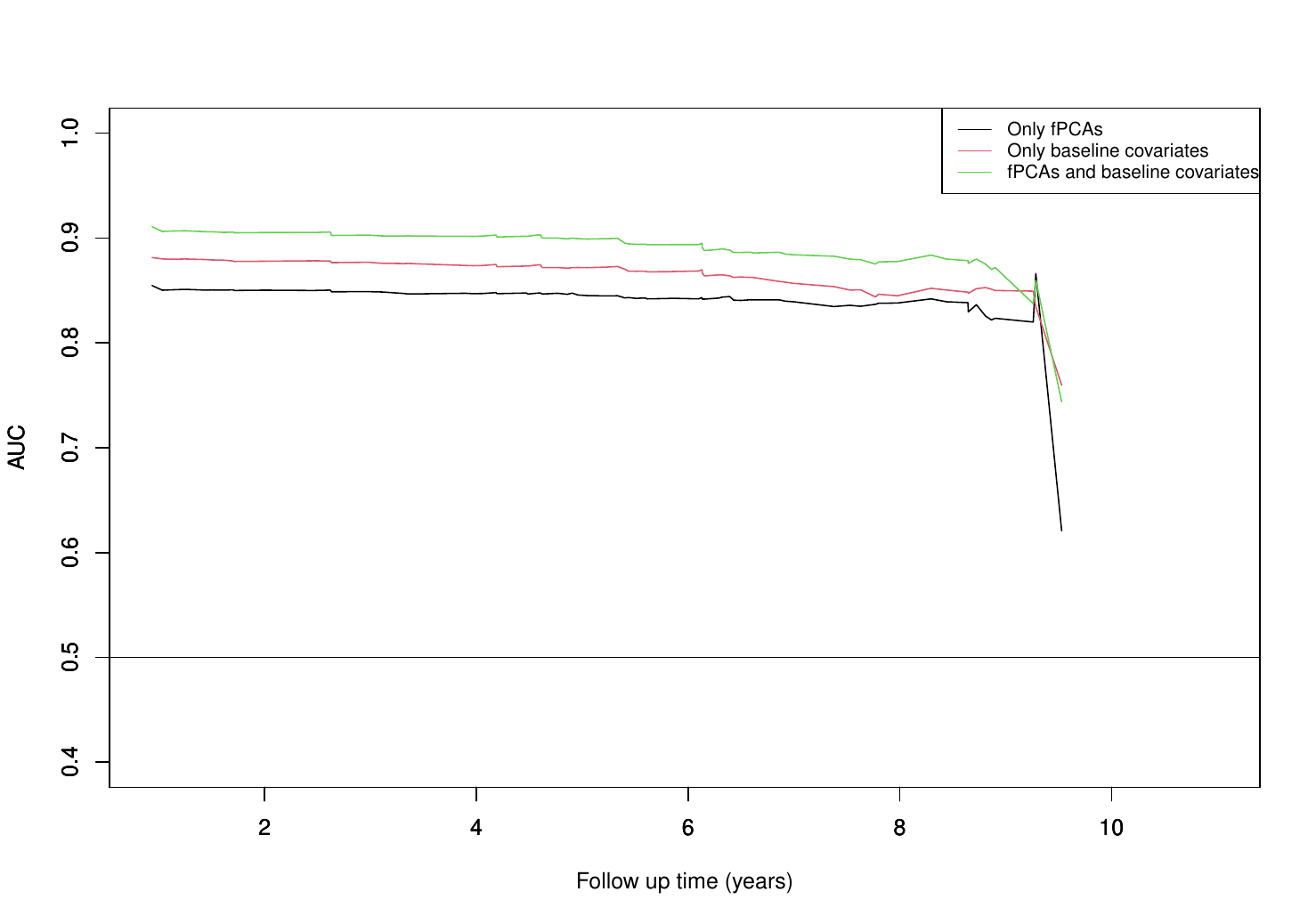}
	    \caption{\footnotesize AUC curves for the three survival models considered: i) only CGM-information; ii) only non-CGM information; iii) CGM and non CGM information.}
	    \label{fig:auc}
         \end{figure}

  In this section, we evaluate and compare the predictive performance of three distinct models through the lens of the time-dependent Receiver Operating Characteristic (ROC) curve:

\begin{enumerate}
\item Employing solely Continuous Glucose Monitoring (CGM) data.
\item Relying exclusively on non-CGM information.
\item Integrating both CGM and non-CGM data.
\end{enumerate}

Figure~\ref{fig:auc} presents the Area Under the Curve (AUC) for each of these models. A notable enhancement in predictive accuracy is observed when CGM data is combined with traditional biomarkers. To determine the statistical significance of the differences observed between the AUC curves, we performed bootstrap resampling with 1000 iterations to calculate the 95\% confidence intervals for the AUC differences. As illustrated in Figure~\ref{fig:difauc2}, the confidence intervals do not encompass 0, underscoring that the comprehensive model incorporating both CGM and non-CGM information significantly outperforms the model based solely on non-CGM data in terms of AUC.
\begin{figure}[ht]
	    \centering
	    \includegraphics[width = 0.6\linewidth,height= 0.6\linewidth]{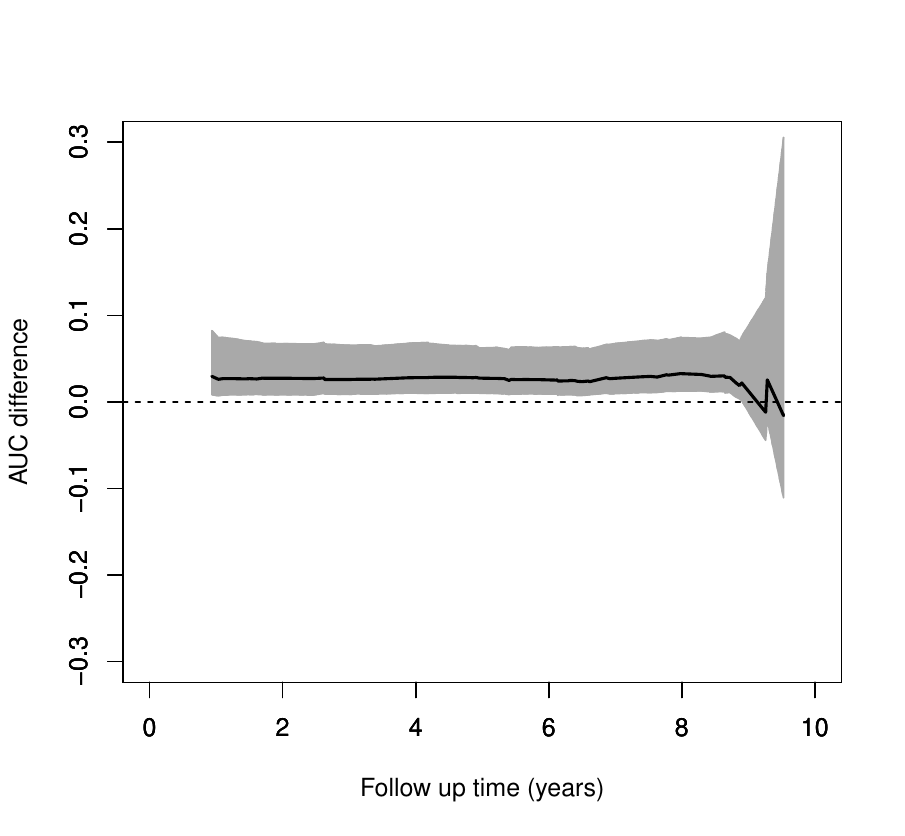}
	    \caption{\footnotesize Difference of AUC over time for the model that contain funcional and non functional CGM information. The 95\% confidence bands calculated by bootstrap resampling with 1000 samples are shown in grey.}
	    \label{fig:difauc2}
         \end{figure}
The Concordance Index (C-index) for the comprehensive model, which includes all imputed CGM data as well as non-CGM information, stands at 0.82, highlighting its superior predictive capacity.

\subsection{Personalized imputations vs. globally imputed CGM  models}\label{sec:personalized}

 \begin{table}[ht!]
            \centering
            \caption{\footnotesize C statistics for the survival models with the different subsets depending on the maximum radius of the confidence bands of the glucodensities, and the number of subjects.
                \label{tab:C}}
             \begin{tabular}{c|c|c}
                Radius & Number of individuals & C index \\
                \hline
                0 & 504 & 0.891 \\
                80 & 511 & 0.891 \\
                90 & 524 & 0.892 \\
                100 & 568 & 0.898 \\
                110 & 566 & 0.899 \\
                120 & 595 & 0.882 \\
                130 & 633 & 0.885 \\
                140 & 686 & 0.881 \\
                150 & 727 & 0.877 \\
                160 & 776 & 0.876 \\
                170 & 822 & 0.865 \\
                180 & 865 & 0.874 \\
                190 & 914 & 0.869 \\
                200 & 992 & 0.875 \\
                370 & 1293 & 0.781 \\ 
            \end{tabular}
        \end{table}

        \begin{figure}[ht!]
	    \centering
	    \includegraphics[width = 0.7\linewidth,height = 0.55\linewidth ]{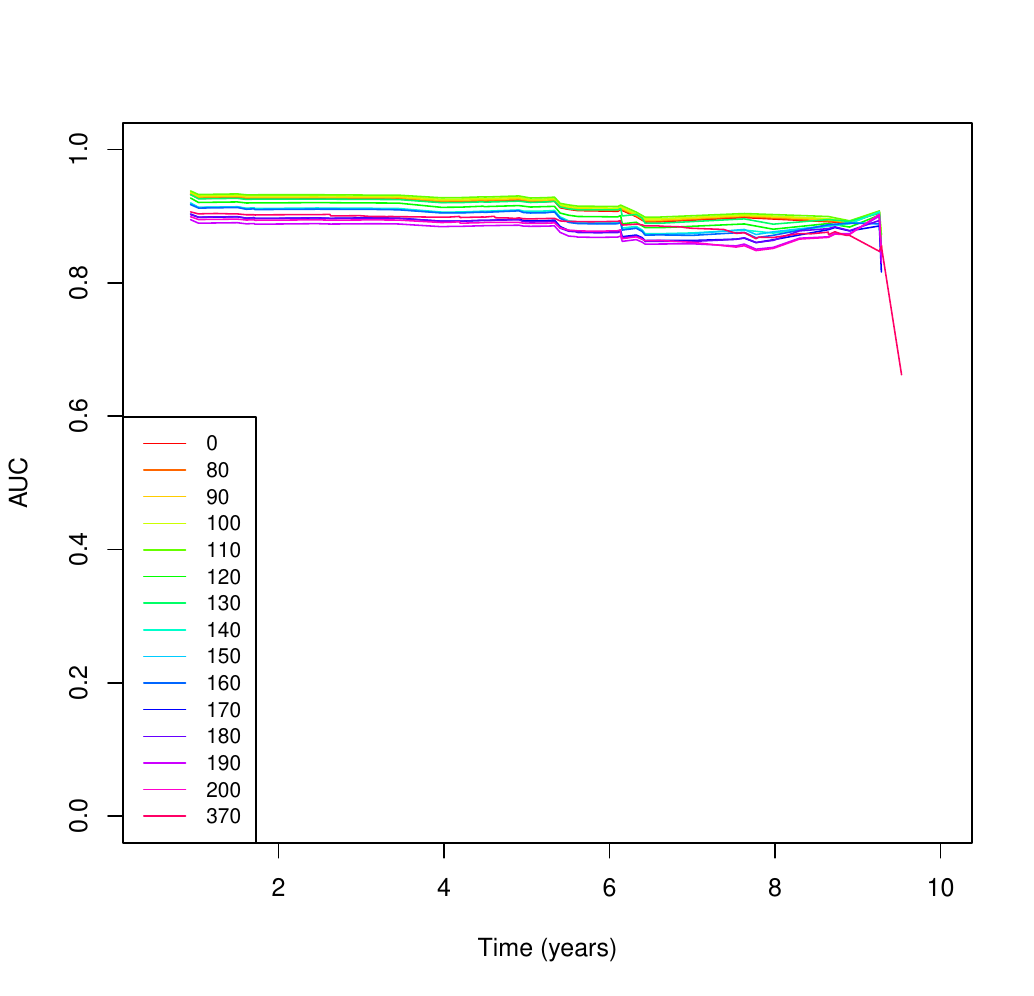}
	    \caption{\footnotesize AUC over time for survival models with  the different subsets depending on the maximum radius of the confidence bands of the quantiles of the glucodensities.}
	    \label{fig:auc_radii}
         \end{figure}

We develop personalized imputation criteria for glucodensity quantiles, contrasting these findings with outcomes derived from predictive models employing a uniform imputation strategy across the entire patient cohort. For a thorough analysis, we examine various radii within a pre-defined grid. Table~\ref{tab:C} displays the Concordance Index (C-index) values obtained for each defined subset, based solely on the imputed glucodensity quantiles. These analyses are conducted under the condition that the maximum confidence interval radius does not surpass the predefined limit, and the table also lists the number of subjects analyzed at each radius. Importantly, our dataset includes 504 individuals with actual glucodensity measurements. Notably, the optimal C-index (0.90), occurs at a radius of 110 and encompasses 62 subjects with imputed Continuous Glucose Monitoring (CGM) data. Figure~\ref{fig:auc_radii} illustrates the Area Under the Curve (AUC) over time for all studied radii, indicating that the overall predictive accuracy exceeds traditional CGM risk assessments, a point elaborated upon in our previous discussions. In overall the C-score with the model with non-functional information is 0.805, there are a improvement of more of ten percent. This improvement highlights the benefit of integrating personalized CGM data into the analysis.

Finally, to elucidate the effectiveness of our personalized imputation approach, Figure~\ref{fig:difauc} displays the conditional Fréchet mean based on the glycemic condition of four representative subjects, showcasing the point-wise results and the predictive bands for the patients, each assigned a distinct maximum radius, and therefore having different levels of uncertainty.
         
\begin{figure}[H]
	    \centering
	    \includegraphics[width = 0.8\linewidth,height= 0.65\linewidth]{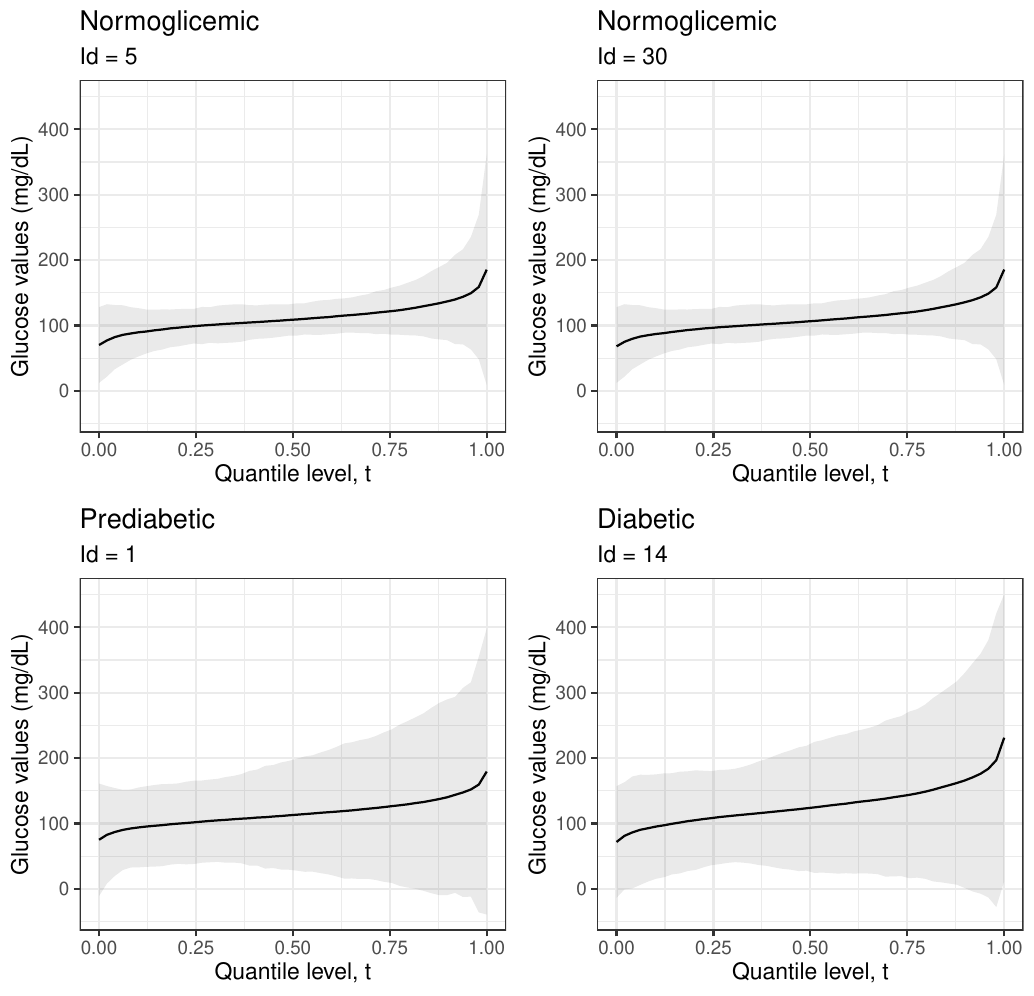}
	    \caption{\footnotesize Conditional Fréchet mean and associated prediction regions in Patients with varying glycemic conditions.}
	    \label{fig:difauc}
         \end{figure}

\section{Discussion}
 \label{sec:disc}

In this study, we delve into a less-explored area of statistical research: personalized imputation strategies for biomedical applications, underscoring the significance of handling missing data, which profoundly impacts both outcomes and predictors \cite{tsiatis2007semiparametric,bertsimas2018predictive}. Unlike previous work focusing primarily on optimal sampling techniques and the development of efficient estimators for scalar variables \cite{lumley2022choosing, https://doi.org/10.1002/sim.9300}, our methodology extends to statistical objects within metric spaces \cite{petersen2019frechet}, marking a significant advancement in statistical modeling for high-resolution medical data.

Methodologically, this paper introduces several notable contributions: a weighted least squares estimator for linear models in metric spaces \cite{petersen2019frechet}, specialized imputation methods for these spaces, innovative non-asymptotic inference techniques for conformal prediction algorithms based on the 2-Wasserstein distance, asymptotic theory for conditional mean imputation within a bounded metric space, and a comprehensive personalized imputation method applicable to various clinical outcomes. This approach underscores the importance of balancing imputation accuracy and reliability, especially in predicting time-to-event outcomes, advocating for a direct assessment of improvements in predictive capacity rather than a sole reliance on biomarkers.

Our approach is demonstrated through its application in biomedicine, specifically in predicting the onset of diabetes in a longitudinal study utilizing data from continuous glucose monitors (CGM). This application not only addresses the widespread use of advanced medical tests in public health initiatives and screening campaigns but also showcases an improvement in model performance by more ten
 percent when integrating CGM data as a digital biomarker , compared to models relying solely on traditional biomarkers. This improvement in predictive accuracy, validated by the C-score, is consistent with traditional biomarkers and findings from other studies \cite{makrilakis2011validation, muhlenbruch2018derivation}.
A key innovation of our model is its ability to integrate high-resolution glucose data through the `glucodensity' concept \cite{matabuena2021glucodensities} using distributional representations \citep{ghosal2021distributional,matabuena2023distributional}, offering novel perspectives for the early identification of diabetes risk. 
Our  results suggest the potential of CGM data to create quantifiable methods to assess the glucose homeostasis of the individual in health-populations, a relatively unexplored topic \cite{hall2018glucotypes}.
 We explore for the first time the incorporation of CGM information and glucodensity into predicting time to diabetes. In the future, it may be useful in establishing diagnostic thresholds for diabetes from a personalized standpoint based on CGM data.

Multiple research directions remain to be explored based on this current research. For longitudinal or multilevel statistical objects, e.g., distributional profiles, the imputation method would need to carefully account for the correlation present within various sub-clusters \citep{goldsmith2015generalized,cui2023fast}. Another interesting direction would be to extend the proposed imputation method to multivariate metric-spaced valued objects, where the distribution of one object could inform another \citep{chiou2014linear,dai2018multivariate}.

By addressing the challenges associated with missing data in digital medicine and the statistical treatment of metric spaces, our study highlights the crucial role of personalization in statistical methodologies, evidenced by a substantial real-world application. As the collection of high-resolution longitudinal data becomes more common, the methodologies introduced herein are poised to become increasingly essential in extensive biomedical studies and the integration of data from wearable devices with genetic information \cite{10.1214/23-AOAS1808}.
%\bibliography{refend}

\bibliographystyle{unsrtnat}

\end{document}